\title{Positive and monotone fragments of FO and LTL} %TODO Please add
\author{Denis Kuperberg}{CNRS, LIP, ENS Lyon, France \and \url{http://perso.ens-lyon.fr/denis.kuperberg} }{denis.kuperberg@ens-lyon.fr}{https://orcid.org/0000-0001-5406-717X}{ANR ReCiProg}%TODO mandatory, please use full name; only 1 author per \author macro; first two parameters are mandatory, other parameters can be empty. Please provide at least the name of the affiliation and the country. The full address is optional. Use additional curly braces to indicate the correct name splitting when the last name consists of multiple name parts.
\author{Quentin Moreau}{ENS Lyon, France}{quentin.moreau@ens-lyon.fr}{}{}
\authorrunning{D. Kuperberg and Q. Moreau} %TODO mandatory. First: Use abbreviated first/middle names. Second (only in severe cases): Use first author plus 'et al.'
\keywords{Positive logic, LTL, separation, first-order, monotone} %TODO mandatory; please add comma-separated list of keywords
\renewcommand{\alph}{\part(\Sigma)}
\newcommand{\Tau}{\mathrm{T}}
\newcommand{\N}{\mathbb{N}}
\newcommand{\op}{\cdot}
\newcommand{\dualmon}{\curlywedge}
\newcommand{\FO}{\mathrm{FO}}
\newcommand{\FOp}{\FO^+}
\newcommand{\FOth}{\FO^3}
\newcommand{\FOthp}{\FO^{3+}}
\newcommand{\FOtw}{\FO^2}
\newcommand{\FOtwp}{\FO^{2+}}
\newcommand{\LTL}{\mathrm{LTL}}
\newcommand{\LTLp}{\LTL^+}
\newcommand{\TL}{\mathrm{TL}}
\newcommand{\TLp}{\TL^+}
\newcommand{\UTL}{\mathrm{UTL}}
\newcommand{\UTLp}{\UTL^+}
\newcommand{\EF}{\mathrm{EF}}
\newcommand{\X}{\mathrm{X}}
\newcommand{\Y}{\mathrm{Y}}
\newcommand{\U}{\mathrm{U}}
\renewcommand{\S}{\mathrm{S}}
\newcommand{\Q}{\mathrm{Q}}
\newcommand{\XU}{\mathrm{XU}}
\newcommand{\YS}{\mathrm{YS}}
\newcommand{\R}{\mathrm{R}}
\renewcommand{\P}{\mathrm{P}}
\newcommand{\F}{\mathrm{F}}
\renewcommand{\H}{\mathrm{H}}
\newcommand{\G}{\mathrm{G}}
\newcommand{\toFO}[1]{#1^\bigstar}
\newcommand{\qr}{\mathrm{qr}}
\renewcommand{\part}{\mathcal{P}}
\renewcommand{\succ}{\mathrm{succ}}
\renewcommand{\nsucc}{\mathrm{nsucc}}
\newcommand{\be}{\mathfrak{be}}
\newcommand{\val}{\nu}
\newcommand{\bin}{\mathfrak{B}}
\renewcommand{\b}{\mathfrak{b}}
\newcommand{\M}{\mathbf{M}}
\renewcommand{\L}{\textsc{LogSpace}}
\newcommand{\NL}{\mathsf{NL}}
\newcommand{\comp}{\mathsf{complete}}
\newcommand{\PSPACE}{\textsc{PSpace}}
\newcommand{\leqml}{\leq_{\mathbf{M}_L}}
\newcommand{\geqml}{\geq_{\mathbf{M}_L}}
\newcommand{\leqa}{\leq_{\alph^*}}
\newcommand{\geqa}{\geq_{\alph^*}}
\newcommand{\aut}{\mathcal{B}}
\begin{document}
% \title{Positive versus monotone in fragments of FO and LTL} %%%%%%%%%%%%
% \author{Denis Kuperberg, Quentin Moreau}
% \date{\today}

\maketitle

\begin{abstract}
    We study the positive logic $\FOp$ on finite words, and its fragments, pursuing and refining the work initiated in \cite{PFO}.
    First, we transpose notorious logic equivalences into positive first-order logic: $\FOp$ is equivalent to $\LTLp$, and its two-variable fragment $\FOtwp$ with (resp. without) successor available is equivalent to $\UTLp$ with (resp. without) the ``next'' operator $\X$ available.  This shows that despite previous negative results, the class of $\FOp$-definable languages exhibits some form of robustness.
    We then exhibit an example of an $\FO$-definable monotone language on one predicate, that is not $\FOp$-definable, refining the example from \cite{PFO} with $3$ predicates. 
    Moreover, we show that such a counter-example cannot be $\FOtw$-definable. 
    %Then, we characterise semantic monotonicity of a regular language $L$ through its syntactic monoid and give a new algorithm which decides whether $L$ is monotone.
    %Thus, we establish that deciding whether $L$ is monotone is either $\NL-\comp$ or $\L$ when the input is given as a DFA or a monoid.
    %%minor results -> not needed ien abstract

    %Finally, we investigate the link between semantic monotonicity and syntactic positivity.
    %Indeed, positive formulas describe monotone languages.
    %Thus one could wonder if the reciprocal is true.
    % We give here some refinements of a counter-example found by Denis Kuperberg.
    % Then we prove some lemmas which may shed a light on the following problem: is a monotone language $\FOtw$-definable definable in $\FOtwp$?
\end{abstract}

%\tableofcontents

\section{Introduction}

In various contexts, monotonicity properties play a pivotal role. For instance the field of monotone complexity investigates negation-free formalisms, and turned out to be an important tool for complexity in general \cite{GrigniSipser92}.
From a logical point of view, a sentence is called monotone (with respect to a predicate $P$) if increasing the set of values where $P$ is true in a structure cannot make the evaluation of the formula switch from true to false. This is crucial e.g. when defining logics with fixed points, where the fixed points binders $\mu X$ can only be applied to formulas that are monotone in $X$. Logics with fixed points are used in various contexts, e.g. to characterise the class \textsc{PTime} on ordered structures \cite{Immerman,Vardi}, as extensions of linear logic such as $\mu$MALL \cite{muMALL}, or in the $\mu$-calculus formalism used in automata theory and model-checking \cite{mucalc}. Because of the monotonocity constraint, it is necessary to recognise monotone formulas, and understand whether a syntactic restriction to positive (i.e. negation-free) formulas is semantically complete.
Logics on words have also been generalised to inherently negation-free frameworks, such as in the framework of cost functions \cite{CostFun}.

This motivates the study of whether the semantic monotone constraint can be captured by a syntactic one, namely the removing of negations, yielding the class of positive formulas.
 For instance, the formula $\exists x, a(x)$ states that an element labelled $a$ is present in the structure. It is both monotone and positive. However, its negation $\forall x, \neg a(x)$  is neither positive nor monotone, since it states the absence of $a$, and increasing the domain where predicate $a$ is true in a given structure could make the formula become false.

Lyndon's preservation theorem \cite{Lyndon59} states that on arbitrary structures, every monotone formula of First-Order Logic ($\FO$) is equivalent to a positive one ($\FOp$ syntactic fragment).
The case of finite structures was open for two decades until Ajtai and Gurevich \cite{AjtaiGurevich87} showed that Lyndon's theorem does not hold in the finite, later refined by StolBoushkin \cite{Stol95} with a simpler proof. Recently, this preservation property of $\FO$ was more specifically shown to fail already on finite graphs and on finite words by Kuperberg \cite{PFO}, implying the failure on finite structure with a more elementary proof than \cite{AjtaiGurevich87,Stol95}.
However, the relationship between monotone and positive formulas is still far from being understood.
On finite words in particular, the positive fragment $\FOp$ was shown \cite{PFO} to have undecidable membership (with input an $\FO$ formula, or a regular language), which could be interpreted as a sign that this class is not well-behaved.
This line of research can be placed in the larger framework of the study of preservation theorems in first-order logic, and their behaviour in the case of finite models, see \cite{Rossman08} for a survey on preservation theorems.

In this work we will concentrate on finite words, and investigate this ``semantic versus syntactic'' relationship for fragments of $\FO$ and Linear Temporal Logic ($\LTL$). We will in particular lift the classical equivalence between $\FO$ and $\LTL$ \cite{Kamp} to their positive fragments, showing that some of the robustness aspects of $\FO$ are preserved in the positive fragment, despite the negative results from \cite{PFO}. This equivalence between $\FO$ and $\LTL$ is particularly useful when considering implementations and real-world applications, as $\LTL$ satisfiability is $\PSPACE$-complete while $\FO$ satisfiability is non-elementary.
It is natural to consider contexts where specifications in LTL can talk about e.g. the activation of a sensor, but not its non-activation, which would correspond to a positive fragment of LTL. We could also want to syntactically force such an event to be ``good'' in the sense that if a specification is satisfied when a signal is off at some time, it should still be satisfied when the signal is on instead. It is therefore natural to ask whether a syntactic constraint on the positivity of $\LTL$ formulas could capture the semantic monotonicity, in the full setting or in some fragments corresponding to particular kinds of specifications.

We will also pay a close look at the two-variable fragment $\FOtw$ of $\FO$ and its $\LTL$ counterpart.
It was shown in \cite{PFO} that there exists a monotone $\FO$-definable language that is not definable in positive $\FO$. We give stronger variants of this counter-example language, and show that such a counter-example cannot be defined in $\FOtw[<]$.
This is obtained via a stronger result characterizing $\FOtw$-monotone in terms of positive fragments of bounded quantifier alternation.
We also give precise complexity results for deciding whether a regular language is monotone, refining results from \cite{PFO}.

The goal of this work is to understand at what point the phenomenon discovered in \cite{PFO} come into play: what are the necessary ingredients for such a counter-example ($\FO$-monotone but not $\FO$ positive) to exist? And on the contrary, which fragments of $\FO$ are better behaved, and can capture the monotonicity property with a semantic constraint, and allow for a decidable membership problem in the positive fragment. 

% \subsection*{Related works}
% Explore citations of e.g. Ajtai Gurevich and Stolboushikin, and see which ones are close, to mention them here. (E.g. Aliaume Lopez,, Benjamin Rossman).

\subsection*{Outline and Contributions}

We begin by introducing two logical formalisms in \Cref{sec:FOandLTL}: First-Order Logic (\ref{sec:FO}) and Temporal Logic (\ref{sec:TL}).

Then, we lift some classical logical equivalences to positive logic in \Cref{sec:equiv}.
First we show that $\FOp$, $\FOthp$ and $\LTLp$ are equivalent in \Cref{equiv}.
We prove that the fragment $\FOtwp$ with (resp. without) successor predicate is equivalent to $\UTLp$ with (resp. without) $\X$ and $\Y$ operators available in \Cref{equiv2} (resp. \Cref{coro:equiv2withoutX}).

In \Cref{sec:monoids}, we give a characterisation of monotonicity using monoids (\Cref{ordrmono}) and we deduce from this an algorithm which decides the monotonicity of a regular language given by a monoid (\Cref{sec:algo}), completing the automata-based algorithms given in \cite{PFO}.
This leads us to the \Cref{complexity} which states that deciding the monotonicity of a regular language is in $\L$ when the input is a monoid while it is $\NL$-$\comp$ when the input is a DFA. This completes the previous result from \cite{PFO} showing $\PSPACE$-completeness for NFA input.

Finally, we study the relationship between semantic and syntactic positivity in \Cref{sec:semVSsynt}.
%We begin by generalising Ehrenfeucht-Fraïssé games to positive fragments of $\FO$ in \Cref{EF2+}.
We give some refinements of the counter-example from \cite{PFO} (a regular and monotone language $\FO$-definable but not definable in $\FOp$).
Indeed, we show that the counter-example can be adapted to $\FOtw$ with the binary predicate "between" in \Cref{FO2be} and we show that we need only one predicate to find a counter-example in $\FO$ in \Cref{prop:onepred}.

We also consider a characterization of $\FOtw[<]$ from Thérien and Wilke \cite{OneQuantifierAlternation} stating that $\FOtw[<]$ is equivalent to $\Sigma_2 \cap \Pi_2$ where $\Sigma_2$ and $\Pi_2$ are fragments of $\FO$ with bounded quantifier alternation.
We show that $\FOtw$-monotone is characterized by $\Sigma_2^+ \cap \Pi_2^+$.

At last, we show that no counter-example for $\FO$ can be found in $\FOtw$ (without successor available) in \Cref{FO2clot}.
We conclude by leaving open the problem of expressive equivalence between $\FOtwp$ and $\FOtw$-monotone, as well as decidability of membership in $\FOtwp$ for regular languages (see \Cref{conj}).

\section{\texorpdfstring{$\FO$ and $\LTL$}{FO and LTL}} \label{sec:FOandLTL}

We work with a set of atomic unary predicates $\Sigma = \{a_1,a_2,...a_{|\Sigma|}\}$, and consider the set of words on alphabet $\alph$.
To describe a language on this alphabet, we use logical formulas. Here we present the different logics and how they can be used to define languages.

\subsection{First-order logics} \label{sec:FO}

Let us consider a set of binary predicates, $=$, $\neq$, $\leq$, $<$, $\succ$ and $\nsucc$, which will be used to compare positions in words. We define the subsets of predicates $\bin_0 := \{ \leq,<, \succ, \nsucc\}$, $\bin_< := \{\leq,<\}$ and $\bin_{\succ} := \{=, \neq, \succ, \nsucc\}$, and a generic binary predicate is denoted $\b$.
As we are going to see, equality can be expressed with other binary predicates in $\bin_0$ and $\bin_<$ when we have at least two variables.
This is why we do not need to impose that $=$ belongs to $\bin_0$ or $\bin_<$.
The same thing stands for $\neq$. Generally, we will always assume that predicates $=$ and $\neq$ are expressible.

Let us start by defining first-order logic $\FO$:

\begin{definition}
    Let $\bin$ be a set of binary predicates. The grammar of $\FO[\bin]$ is as follows:
    $$\varphi, \psi::= \bot \mid \top \mid \b(x,y) \mid a(x) \mid \varphi \land \psi \mid \varphi \lor \psi \mid \exists x, \varphi \mid \forall x, \varphi \mid  \neg \varphi,$$

    where $\b$ belongs to $\bin$.
\end{definition}

Closed $\FO$ formulas (those with no free variable) can be used to define languages. Generally speaking, a pair consisting of a word $u$ and a function $\val$ from the free (non-quantified) variables of a formula $\varphi$ to the positions of $u$ satisfies $\varphi$ if $u$ satisfies the closed formula obtained from $\varphi$ by replacing each free variable with its image by $\val$.

\begin{definition}
    Let $\varphi$, a formula with $n$ free variables, $x_1$, ..., $x_n$, and $u$ a word. Let $\val$ be a function of $\{x_1,...,x_n\}$ in $[\![0,|u|-1]\!]$. We say that $(u,\val)$ satisfies $\varphi$, and we define $u,\val \models \varphi$ by induction on $\varphi$ as follows:
    \begin{itemize}
        \item $u,\val \models \top$ and we never have $u, \val \models \bot$,
        \item $u, \val \models x < y$ if $\val(x) < \val(y)$,
        \item $u, \val \models x \leq y$ if $\val(x) \leq \val(y)$,
        \item $u, \val \models \succ(x,y)$ if $\val(y) = \val(x)+1$,
        \item $u, \val \models \nsucc(x,y)$ if $\val(y) \neq \val(x)+1$,
        \item $u, \val \models a(x)$ if $a \in u[\val(x)]$ \textbf{(note that we only ask inclusion here)},
        \item $u, \val \models \varphi \land \psi$ if $u, \val \models \varphi$ and $u, \val \models \psi$,
        \item $u, \val \models \varphi \lor \psi$ if $u, \val \models \varphi$ or $u, \val \models \psi$,
        \item $u, \val \models \exists x, \varphi(x,x_1,...,x_n)$ if there is $i$ of $u$ such that we have $u, \val \cup [x \mapsto i] \models \varphi$,
        \item $u, \val \models \forall x, \varphi(x,x_1,...,x_n)$ if for any index $i$ of $u$, $u, \val \cup [x \mapsto i] \models \varphi$,
        \item $u,\val \models \neg \varphi$ if we do not have $u,\val \models \varphi$.
    \end{itemize}

    For a closed formula, we simply note $u \models \varphi$.
\end{definition}

Here is an example:

\begin{example}
    The formula $\varphi = \exists x, \forall y, (x=y \lor \neg a(y))$ describes the set of non-empty words that admit at most one $a$. For example, $\{a\}\{a,b\}$ does not satisfy $\varphi$ because two of its letters contain an $a$, but $\{a,b,c\}\{b\}\emptyset$ does satisfy $\varphi$.
\end{example}

\begin{remark}
    The predicates $\succ$ and $\nsucc$ can be expressed in $\FOp[\bin_<]$ with three variables. If there are no restriction on variables, in particular if we can use three variables, all binary predicates in $\bin_0$ can be expressed from those in $\bin_<$. Thus, we will consider the whole set of binary predicates available when the number of variables is not constrained, and we will note $\FO$ for $\FO[\bin_0]$ or $\FO[\bin_<]$, which are equivalent, and similarly for $\FOp$.
\end{remark}

Let us now turn our attention to $\FOp$, the set of first-order formulas without negation. We recall definitions from \cite{PFO}.

\begin{definition}
    The grammar of $\FOp$ is that of $\FO$ without the last constructor, $\neg$.
\end{definition}

% This positive logic is used to describe languages which are said to be monotone.

Let us also define monotonicity properties, starting with an order on words.

\begin{definition}
    A word $u$ is lesser than a word $v$ if $u$ and $v$ are of the same length, and for any index $i$ (common to $u$ and $v$), the $i$-th letter of $u$ is included in the $i$-th letter of $v$.
    When a word $u$ is lesser than a word $v$, we note $u \leqa v$.
\end{definition}

\begin{definition}
    Let $L$ be a language.
    We say that $L$ is monotone when for any word $u$ of $L$, any word greater than $u$ belongs to $L$.
\end{definition}

\begin{proposition}[\cite{PFO}]
        $\FOp$ formulas are monotone in unary predicates, i.e. if a model $(u,\val)$ satisfies a formula $\varphi$ of $\FOp$, and $u \leqa v$, then $(v,\val)$ satisfies $\varphi$.
\end{proposition}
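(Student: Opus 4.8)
The plan is to prove this by structural induction on the formula $\varphi$ of $\FOp$. The statement to establish is that for any model $(u,\val)$ satisfying $\varphi$, and any word $v$ with $u \leqa v$ (so $u$ and $v$ have the same length and $\val$ makes sense on both), we have $(v,\val) \models \varphi$. Since $u$ and $v$ share the same length, the valuation $\val$ maps free variables into positions that are common to both words, so it is legitimate to evaluate $\varphi$ against $(v,\val)$.

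First I would handle the base cases. The constants $\top$ and $\bot$ are immediate: $\top$ holds everywhere and $\bot$ holds nowhere, so there is nothing to check. For the binary predicates $\b(x,y)$ (namely $=$, $\neq$, $\leq$, $<$, $\succ$, $\nsucc$), the crucial observation is that their truth depends only on the \emph{positions} $\val(x)$ and $\val(y)$, not on the letters sitting at those positions. Since $u$ and $v$ have the same length and $\val$ is unchanged, $u,\val \models \b(x,y)$ holds if and only if $v,\val \models \b(x,y)$. The one genuinely interesting base case is the unary predicate $a(x)$: here $u,\val \models a(x)$ means $a \in u[\val(x)]$, and the definition of $u \leqa v$ gives $u[\val(x)] \subseteq v[\val(x)]$, so $a \in v[\val(x)]$, i.e. $v,\val \models a(x)$. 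This is precisely the step where the monotone order on words is used, and it is the reason that the negation constructor must be excluded: $\neg a(x)$ would require the reverse inclusion.

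Next I would treat the inductive cases, which are all routine once the induction hypothesis is in place. For $\varphi \land \psi$, if $(u,\val)$ satisfies both conjuncts then by induction $(v,\val)$ satisfies both, hence the conjunction; the disjunction $\varphi \lor \psi$ is analogous. For $\exists x, \varphi$, from a witness position $i$ with $u, \val \cup [x \mapsto i] \models \varphi$ the induction hypothesis yields $v, \val \cup [x \mapsto i] \models \varphi$, and since $u$ and $v$ have the same set of positions, $i$ is still a valid witness, giving $v,\val \models \exists x, \varphi$. For $\forall x, \varphi$, one applies the induction hypothesis uniformly to every position $i$, again using that the position sets of $u$ and $v$ coincide.

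I expect no serious obstacle here: the only subtle point is making sure the same-length hypothesis is invoked wherever position quantifiers or binary predicates are compared, so that $\val$ and the quantifier ranges transfer verbatim between $u$ and $v$. The entire content of the proposition is concentrated in the $a(x)$ base case, where the inclusion $u[\val(x)] \subseteq v[\val(x)]$ is the lone place the order $\leqa$ intervenes; the absence of the $\neg$ constructor in the $\FOp$ grammar is exactly what guarantees monotonicity is never broken, since every connective and quantifier preserves the upward-truth property.
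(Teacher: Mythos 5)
Your proof is correct and is essentially the argument the paper relies on (the proposition is imported from \cite{PFO}, where it is established by exactly this structural induction): the only non-trivial case is the atomic predicate $a(x)$, where $u \leqa v$ gives $a \in u[\val(x)] \subseteq v[\val(x)]$, and every remaining constructor of $\FOp$ preserves the implication. Nothing is missing.
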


We will also be interested in other logical formalisms, obtained either by restricting $\FO$, or several variants of temporal logics.

First of all, let us review classical results  obtained when considering restrictions on the number of variables. While an $\FO$ formula on words is always logically equivalent to a three-variable formula \cite{Kamp}, two-variable formulas describe a class of languages strictly included in that described by first-order logic. In addition, the logic $\FO$ is equivalent to Linear Temporal Logic (see below).

Please note: these equivalences are only true in the framework on word models. In other circumstances, for example when formulas describe graphs, there are formulas with more than three variables that do not admit equivalents with three variables or less.

\begin{definition}
    The set $\FOth$ is the subset of $\FO$ formulas using only three different variables, which can be reused. We also define $\FOthp$ for formulas with three variable and without negation. Similarly, we define $\FOtw$ and $\FOtwp$ with two variables.
\end{definition}

\begin{example}
    The formula $\exists y, \succ(x,y) \land (\exists x, b(x) \land (\forall z, z \geq x \lor z < y \lor a(z) ))$ (a formula with one free variable $x$ that indicates that the letter labeled by $x$ will be followed by a factor of the form $aaaaa. ..aaab$) is an $\FOth$ formula, and even an $\FOthp$ formula: there is no negation, and it uses only three variables, $x$, $y$ and $z$, with a reuse of $x$. On the other hand, it does not belong to $\FOtw$.
\end{example}

\subsection{Temporal logics} \label{sec:TL}

Some logics involve an implicit temporal dimension, where positions are identified with time instants.
For example, Linear Temporal Logic (LTL) uses operators describing the future, i.e. the indices after the current position in a word. This type of logic can sometimes be more intuitive to manipulate, and present better complexity properties, see introduction.  As mentioned above, $\FOtw$ is not equivalent to $\FO$. On the other hand, it is equivalent to $\UTL$, a restriction of $\LTL$ to its unary temporal operators. %It is more intuitive to work with a logic that manipulates temporal operators, rather than a logic restricted in the number of variables.

To begin with, let us introduce $\LTL$, which is equivalent to $\FO$.

\begin{definition}
    The grammar of $\LTL$ is as follows:
    \vspace{-1em}
    
    $$\varphi, \psi::= \bot \mid \top \mid a \mid \varphi \land \psi \mid \varphi \lor \psi \mid \X \varphi \mid \varphi \U \psi \mid \varphi \R \psi \mid \neg \varphi.
    $$
    Removing the last constructor gives the grammar of $\LTLp$.
\end{definition}

This logic does not use variables. To check that a word satisfies an $\LTL$ formula, we evaluate the formula at the initial instant, that is to say, the word's first position. The $\X$ constructor then describes constraints about the next instant, i.e. the following position in the word. So the word $a.u$, where $a$ is a letter, satisfies $\X \varphi$ if and only if the suffix $u$ satisfies $\varphi$. The construction $\varphi \U \psi$ ($\varphi$ until $\psi$) indicates that the formula $\psi$ must be verified at a given point in time and that $\varphi$ must be verified until then. We define $\varphi \R \psi$ as being equal to $\neg (\neg \varphi \U \neg \psi)$.
Let us define this formally:

\begin{definition}
    Let $\varphi$ be an $\LTL$ formula, and $u = u_0...u_{m-1}$ be a word. We say that $u$ satisfies $\varphi$ and define $u \models \varphi$ by induction on $\varphi$ as follows:
    \begin{itemize}
        \item $u \models \top$ and we never have $u \models \bot$,
        \item $u \models a$ if $a \in u[0]$,
        \item $u \models \varphi \land \psi$ if $u \models \varphi$ and $u \models \psi$,
        \item $u \models \varphi \lor \psi$ if $u \models \varphi$ or $u \models \psi$,
        \item $u \models \X \varphi$ if $u_1...u_{m-1} \models \varphi$,
        \item $u \models \varphi \U \psi$ if there is $i\in[\![0,m-1]\!]$ such that $u_i...u_{m-1} \models \psi$ and for all $j\in[\![0,i-1]\!]$, $u_j...u_{m-1} \models \varphi$,
        %\item $u \models \varphi \R \psi$ if $u \models \neg (\neg \varphi \U \neg \psi)$,
        \item $u \models \varphi \R \psi$ if $u \models (\psi \U (\psi \wedge \varphi))$ or for all $i\in[\![0,m-1]\!]$ we have $u_i...u_{m-1} \models \psi$,
        \item $u \models \neg \varphi$ if we do not have $u \models \varphi$.
    \end{itemize}
\end{definition}

\begin{remark}
        Let us call $\varphi \XU \psi$ the formula $\X (\varphi \U \psi)$, for any pair $(\varphi,\psi)$ of $\LTL$ formulas. The advantage of $\XU$ is that $\X$ and $\U$ can be redefined from $\XU$. The notation $\U$ for $\XU$ is regularly found in the literature. %In particular, do not confuse the $\U$ used here with the one found in the references.
\end{remark}

$\LTL$ is included in Temporal Logic, $\TL$. While the former speaks of the future, i.e. of the following indices in the word, thanks to $\X$, $\U$ and $\R$, the latter also speaks of the past. Indeed, we introduce $\Y$, $\S$ (since) and $\Q$ the respective past analogues of $\X$, $\U$ and $\R$.

\begin{definition}
    The grammar of $\TL$ is as follows:
    \vspace{-1em}
    
    $$\varphi, \psi::= \LTL \mid \Y \phi \mid \phi \S \psi \mid \varphi \Q \psi.$$
    Similarly, the grammar of $\TLp$ is that of $\LTLp$ extended with $\Y$, $\S$ and $\Q$.
\end{definition}

\begin{remark}
    As for $\XU$, we will write $\varphi \YS \psi$ for $\Y (\varphi \S \psi)$. We also note $\P \varphi$, $\F \varphi$, $\H \varphi$ and $\G \varphi$ for $\top \YS \varphi$, $\top \XU \varphi$, $\varphi \YS \bot$ and $\varphi \XU \bot$ respectively. The formulas $\F \varphi$ and $\G \varphi$ mean respectively that the formula $\varphi$ will be satisfied at least once in the future ($\F$ as Future), and that $\varphi$ will always be satisfied in the future ($\G$ as Global). Similarly, the operators $\P$ (as Past) and $\H$ are the respective past analogues of $\F$ and $\G$.
\end{remark}

%In the case of $\LTL$, only the future tense is considered.
%Consequently, there was no ambiguity about the present, the word's first index.\denis{a reformuler} In the case of $\TL$, on the other hand, we need to specify which index is considered the present. This justifies the definition of the satisfaction of a $\TL$ formula.

% In the case of $\LTL$, only the future tense is considered.
% This is why models of $\LTL$ formulas are words.
% Indeed, when verifying if an $\LTL$ formula is satisfied by a word we could forget the past and so consider only suffixes of the word.
% In the case of $\TL$, on the other hand, we need to specify which index is considered to be the present.
% This justifies the definition of the satisfaction of a $\TL$ formula.

When evaluating an $\LTL$ or $\TL$ formula on a word $u=u_0\dots u_m$, we start by default on the first position $u_0$. However, we need to define more generally the evaluation of a $\TL$ formula on a word from any given position:

\begin{definition}
    Let $\varphi$ be a $\TL$ formula, $u = u_0...u_{m-1}$  a word, and $i\in[\![0,m-1]\!]$. We define $u,i \models \varphi$ by induction on $\varphi$:
    \begin{itemize}
        \item $u,i \models \top$ and we never have $u \models \bot$,
        \item $u,i \models a$ if $a \in u_i$,
        \item $u,i \models \varphi \land \psi$ if $u,i \models \varphi$ and $u,i \models \psi$,
        \item $u,i \models \varphi \lor \psi$ if $u,i \models \varphi$ or $u,i \models \psi$,
        \item $u,i \models \X \varphi$ if $u, i+1 \models \varphi$,
        \item $u,i \models \varphi \U \psi$ if there is $j\in[\![i,m-1]\!]$ such that $u,j \models \psi$ and for all $k\in[\![i,j-1]\!]$, $u,k \models \varphi$,
        \item $u,i \models \psi \R \varphi$ if $u,i \models \neg (\neg \psi \U \neg \varphi)$,
        \item $u,i \models \neg \varphi$ if we do not have $u,i \models \varphi$,
        \item $u,i \models \Y \varphi$ if $u,i-1 \models \varphi$,
        \item $u,i \models \varphi \S \psi$ if there is $j\in[\![0,i]\!]$ such that $u,j \models \psi$ and for all $k\in[\![j+1,i]\!]$, $u,k \models \varphi$.
    \end{itemize}
\end{definition}

Finally, let us introduce $\UTL$ and $\UTLp$, the Unary Temporal Logic and its positive version. The $\UTL$ logic does not use the $\U$ or $\R$ operator, but only $\X$, $\F$ and $\G$ to talk about the future. Similarly, we cannot use $\S$ or $\Q$ to talk about the past.

\begin{definition}
    The grammar of $\UTL$ is as follows:
     $$
    \varphi, \psi::= \bot \mid \top \mid a \mid \varphi \land \psi \mid \varphi \lor \psi \mid \X \varphi \mid \Y \phi \mid \P \varphi \mid \F \varphi \mid \H \varphi \mid \G \varphi \mid \neg \varphi .
    $$

    %We define $\UTL[\X,\Y]$ from this grammar by deleting the constructors $\P$, $\F$, $\H$ and $\G$, and we define $\UTL[\P,\F,\H,\G]$ by deleting the constructors $\X$ and $\Y$.
    
    We define define $\UTL[\P,\F,\H,\G]$ from this grammar by deleting the constructors $\X$ and $\Y$.

    %The grammar of $\UTLp$ is obtained by deleting the last constructor, and similarly, we define $\UTLp[\X,\Y]$ and $\UTLp[\P,\F,\H,\G]$ by deleting the negation in $\UTL[\X,\Y]$ and $\UTL[\P,\F,\H,\G]$.
    
    The grammar of $\UTLp$ is obtained by deleting the last constructor, and similarly, we define $\UTLp[\P,\F,\H,\G]$ by deleting the negation in $\UTL[\P,\F,\H,\G]$.
\end{definition}

\begin{remark}
    In the above definition, $\H$ and $\G$ can be redefined with $\P$ and $\F$ thanks to negation, but are necessary in the case of $\UTLp$.
\end{remark}

When two formulas $\varphi$ and $\psi$ are logically equivalent, i.e. admit exactly the same models, we denote it by $\varphi \equiv \psi$. Note that a closed  $\FO$ formula can be equivalent to an $\LTL$ formula, since their models are simply words. Similarly, we can have $\varphi\equiv\psi$ when $\varphi$ is an $\FO$ formula with one free variable (having models of the form $(u,i)$) and $\psi$ is a $\LTL$ or $\TL$ formula, this time not using the default starting position for $\TL$ semantics.% However, since $\LTL$ is a fragment of $\TL$, we can consider $\LTL$ formulas as $\TL$ formulas and thus consider models of the form $(u,i)$ where $u$ is a word and $i$ an index of $u$ for models of an $\LTL$ formula.

\section{Logical equivalences} \label{sec:equiv}

We want to lift to positive fragments some classical theorems of equivalence between logics, such as these classical results:

\begin{theorem}[\cite{Kamp,UTL}]$ $
\begin{itemize}
\item $\FO$ and $\LTL$ define the same class of languages.
\item $\FOtw$ and $\UTL$ define the same class of languages.
\end{itemize}
\end{theorem}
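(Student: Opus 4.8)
The plan is to prove each equivalence by exhibiting translations in both directions, the directions going from the temporal side into first-order logic being routine and the converse directions carrying the real content. For $\LTL \subseteq \FO$, I would define by induction on an $\LTL$ formula $\varphi$ an $\FO$ formula $\toFO{\varphi}$ with one free variable $x$ such that $u,i \models \varphi$ iff $(u,[x \mapsto i]) \models \toFO{\varphi}$. The Boolean cases are immediate; $\X \varphi$ becomes $\exists y,\, \succ(x,y) \wedge \toFO{\varphi}(y)$, and $\varphi \U \psi$ becomes $\exists y,\, x \leq y \wedge \toFO{\psi}(y) \wedge \forall z,\, (x \leq z \wedge z < y) \to \toFO{\varphi}(z)$, where variables are reused so that only three are ever needed. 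A closed $\LTL$ formula then corresponds to its translation evaluated at the first position, witnessing $\LTL \subseteq \FOth \subseteq \FO$.

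The hard direction $\FO \subseteq \LTL$ is Kamp's theorem, and I would establish the stronger statement that every $\FO$ formula with at most one free variable is equivalent, under the position-based semantics, to a $\TL$ formula (allowing the past operators $\Y$, $\S$, $\Q$). I would induct on the structure of the formula, the only delicate case being $\exists y,\, \varphi(x,y)$: here I split the witness $y$ according to whether $y < x$, $y = x$, or $y > x$, rewrite each disjunct using $\S$ and $\U$ applied to the $\TL$ formula obtained inductively for $\varphi$, and then invoke Gabbay's separation theorem to rewrite the resulting $\TL$ formula as a Boolean combination of pure-past, pure-present and pure-future formulas. Evaluating a closed formula at the first position trivialises every pure-past subformula, leaving a pure $\LTL$ formula. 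I expect the main obstacle to be exactly this step: controlling the quantifier over $y$ and recombining the three directional components, for which the separation theorem does the heavy lifting.

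For $\FOtw$ versus $\UTL$, the inclusion $\UTL \subseteq \FOtw$ is again a direct induction, translating $\F \varphi$ as $\exists y,\, x < y \wedge \toFO{\varphi}(y)$ and $\P$, $\H$, $\G$ analogously (with $\X$, $\Y$ using $\succ$); since a unary operator never needs two witnesses simultaneously, reusing a single second variable suffices, so two variables are enough. For the converse $\FOtw \subseteq \UTL$ I would use a two-pebble Ehrenfeucht--Fra\"iss\'e argument, showing that the distinguishing power of $\FOtw$ formulas of a given quantifier rank is captured exactly by nestings of bounded depth of the unary operators of $\UTL$, exploiting that with only two variables every quantification is evaluated relative to the current pebbled position and can inspect the word in a single direction at a time, matching the semantics of $\F$ and $\P$. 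The obstacle here is the collapse of arbitrarily large quantifier rank in two variables onto bounded nestings of unary modalities, which is the technical core of the Th\'erien--Wilke and Etessami--Vardi--Wilke analysis.
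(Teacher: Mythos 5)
This theorem is not proved in the paper: it is the classical result of Kamp and of Etessami--Vardi--Wilke, cited as such, and the paper only \emph{adapts} its proof techniques to the positive fragments in Section~\ref{sec:equiv}. Your sketch reconstructs the standard route, and its overall architecture (easy direction by syntactic translation into $\FOth$, hard direction by induction plus the separation theorem, and a dedicated argument for $\FOtw$ versus $\UTL$) is the same one the paper relies on. Two remarks are in order.

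First, there is a concrete gap in your treatment of the case $\exists y\,\varphi(x,y)$ for full $\FO$. You propose to ``apply $\S$ and $\U$ to the $\TL$ formula obtained inductively for $\varphi$'', but the induction hypothesis only applies to formulas with a single free variable, and $\varphi(x,y)$ has two; moreover $\varphi$ may contain further bound variables $z$ and atoms such as $\b(x,z)$, which the case split $y<x$, $y=x$, $y>x$ does not trivialise. The missing idea is the one the paper spells out in the proof of \Cref{FOtoLTL}: first case-split on the unary type of $x$ (which letters hold at $x$), then replace every binary atom $\b(x,z)$ by a fresh unary ``positional'' predicate $\P_{\b}(z)$, so that the body becomes a genuine one-free-variable formula to which the induction hypothesis applies; the separation theorem is then needed \emph{inside} the inductive step, precisely to push these positional predicates into pure-past, pure-present and pure-future components where they trivialise to $\top$ or $\bot$. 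Using separation only at the very end, to kill pure-past parts at position $0$, is not enough. Without this substitution step the induction does not go through as written.

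Second, for $\FOtw\subseteq\UTL$ you propose a two-pebble Ehrenfeucht--Fra\"iss\'e argument. That is a legitimate and well-known route, but it is genuinely different from the one the paper adapts in \Cref{equiv2}, which follows Etessami--Vardi--Wilke's \emph{syntactic} induction: case-split on the five position formulas $y<x\wedge\nsucc(y,x)$, $\succ(y,x)$, $y=x$, $\succ(x,y)$, $x<y\wedge\nsucc(x,y)$, trivialise all binary atoms accordingly, and translate each disjunct by $\Y\P$, $\Y$, nothing, $\X$, $\X\F$ (dually with $\H$, $\G$ for $\forall$). The syntactic induction has the advantage of producing an explicit formula and of transferring directly to the negation-free setting, which is exactly why the paper uses it; the game-theoretic argument yields the collapse of quantifier rank onto modal nesting depth but gives less control over the syntax of the resulting $\UTL$ formula.
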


\subsection{\texorpdfstring{Equivalences to $\FOp$}{Equivalences to FO+}} \label{sec:equivFO+}

We aim at proving the following theorem, lifting classical results from $\FO$ to $\FOp$:

\begin{theorem}\label{equiv}
    The logics $\FOp$, $\LTLp$ and $\FOthp$ describe the same languages.
\end{theorem}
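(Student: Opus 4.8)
The plan is to establish the chain of equivalences $\FOp \equiv \FOthp \equiv \LTLp$ by lifting the classical proofs while carefully tracking positivity. Since $\FOthp \subseteq \FOp$ trivially, and every $\LTLp$ formula translates into $\FOthp$ by the standard Kamp-style translation (which I would verify preserves positivity, as the temporal operators $\X, \U, \R$ translate using only $<$, $\leq$, $\land$, $\lor$ and existential/universal quantification over three reusable variables, with no negation introduced), the core of the work is the reverse direction: showing every $\FOp$ formula is equivalent to some $\LTLp$ formula. I would first reduce to sentences, interpreting a formula with free variables as an $\LTLp$ or $\TLp$ formula evaluated at a position, so that the induction has enough generality to handle quantifiers.

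The main technical step is a positive analogue of Kamp's theorem (or of the Gabbay–Pnueli–Shelah separation argument): given an $\FOp$ formula $\varphi(x)$ with one free variable, produce an equivalent $\TLp$ formula. I would proceed by induction on the structure of $\varphi$. The Boolean cases $\land$ and $\lor$ are immediate since $\LTLp$/$\TLp$ are closed under these. The crucial cases are the quantifiers $\exists y\, \psi$ and $\forall y\, \psi$. Here the classical proof splits on the relative position of $y$ with respect to the free variable $x$ (namely $y<x$, $y=x$, $y>x$), using the $\S$ and $\U$ operators (and their past/future forms) to capture the quantified position, and the global operators $\F,\P,\G,\H$ to handle the unconstrained cases. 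I would need both future and past operators here, which is exactly why the target is $\TLp$ rather than $\LTLp$; the separation/normalization into pure future $\LTLp$ at the end would then appeal to a positive version of the separation theorem.

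The point requiring the most care is the treatment of the \emph{universal} quantifier. In classical Kamp-style arguments one often handles $\forall$ via $\neg \exists \neg$, but this is precisely forbidden in the positive setting. So I would handle $\forall y\,\psi$ directly: decompose it as a conjunction over the position cases (strictly before $x$, at $x$, strictly after $x$) and express each case using the positive release-type operators $\R$ and $\Q$ together with $\G$ and $\H$, which are exactly the operators retained in $\TLp$ for stating that something holds throughout a region. This is where I expect the main obstacle to lie: one must verify that the region-constrained universal statements admit positive $\TLp$ expressions, and that the inductive hypothesis for the immediate subformula (evaluated at the new position $y$) can be glued back in a negation-free way. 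Establishing the correct positive normal form and checking that no hidden negation is needed in the $\forall$/release case is the heart of the argument; once it is done, the remaining translations between $\FOthp$, $\FOp$ and $\TLp$, and the collapse of $\TLp$ to $\LTLp$ via positive separation, follow the classical template with positivity preserved throughout.
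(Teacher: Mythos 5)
Your overall architecture matches the paper's: the trivial inclusion $\FOthp\subseteq\FOp$, a negation-free Kamp-style translation from $\LTLp$ into $\FOthp$, and an induction translating $\FOp$ into $\TLp$ followed by a positive separation theorem to land back in $\LTLp$. The genuine gap is in the inductive step for $\exists y\,\psi(x,y)$ and $\forall y\,\psi(x,y)$: before you can split on the position of $y$ relative to $x$ and invoke the induction hypothesis, you must decouple $\psi$ from all atomic facts about the anchor position $x$ --- both the unary atoms $a_i(x)$ and the binary atoms $\b(x,z)$ for \emph{inner} quantified variables $z$, not only for $y$. The classical way to remove the $a_i(x)$ is the case split $\bigvee_{S}\bigl(\bigwedge_{i\in S}a_i(x)\land\bigwedge_{i\notin S}\neg a_i(x)\land\psi^S\bigr)$, which reintroduces negation. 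The step your sketch is missing is the observation that the negative conjuncts can simply be dropped because $\FOp$ formulas are monotone in the unary predicates: if $\psi^S$ holds together with $\bigwedge_{i\in S}a_i(x)$, then $\psi^{S'}$ holds for the actual set $S'\supseteq S$ of predicates true at $x$, hence $\psi$ holds. Without this monotonicity argument the reduction to a one-free-variable formula, which is what makes the induction hypothesis applicable, does not go through positively.

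Relatedly, you locate the main difficulty in the universal quantifier, but once the decoupling is done the paper dispatches $\forall y$ symmetrically to $\exists y$ via $\H\chi\land\chi\land\G\chi$; no release-type reasoning is needed there. The remaining binary atoms $\b(x,z)$ are handled by replacing them with fresh unary predicates $\P_\b(z)$, and it is precisely to eliminate these that the positive separation lemma is invoked \emph{inside} each step of the induction (the $\P_\b$ trivialise to $\top$ or $\bot$ in each purely past, present or future block), not merely at the very end to convert $\TLp$ into $\LTLp$. Both of these points need to be made explicit for the argument to close.
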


\begin{lemma}\label{LTLtoFO}
    The set of languages described by $\LTLp$ is included in the set of languages recognised by $\FOthp$.%, itself included in the set of languages recognised by $\FOp$.
\end{lemma}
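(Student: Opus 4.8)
The plan is to prove $\LTLp \subseteq \FOthp$ by a syntax-directed translation of positive $\LTL$ formulas into positive three-variable first-order formulas, following the classical Kamp-style translation but carefully checking that no negation is introduced at any step. Concretely, I would define by induction on an $\LTLp$ formula $\varphi$ a corresponding $\FOthp$ formula $\toFO{\varphi}$ with a single free variable $x$, such that for every word $u$ and position $i$ we have $u,i \models \varphi$ if and only if $u, [x \mapsto i] \models \toFO{\varphi}$. Taking $\varphi$ closed (evaluated at position $0$) then gives the language equivalence by relativising to the first position, e.g. $\exists x, (\forall y, x \leq y) \land \toFO{\varphi}$, which is still negation-free.

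The heart of the argument is the inductive definition, and I would organise it one constructor at a time. The base cases $\bot, \top, a$ map to $\bot, \top, a(x)$. The Boolean connectives $\land$ and $\lor$ translate homomorphically, using that $\FOthp$ is closed under $\land$ and $\lor$. For the temporal operators I would reuse the third variable to stay within three variables: $\X \varphi$ becomes $\exists y, (\succ(x,y) \land \toFO{\varphi}[x := y])$, where the substitution reintroduces $x$ after requantification so that only $x,y$ (and later $z$) are ever used. The crucial case is $\varphi \U \psi$, rendered as $\exists y, (x \leq y \land \toFO{\psi}[x := y] \land \forall z, ((x \leq z \land z < y) \to \toFO{\varphi}[x := z]))$, except that the implication $\to$ must be rewritten without negation as a disjunction $z < x \lor y \leq z \lor \toFO{\varphi}[x:=z]$, keeping the formula in $\FOp$.

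The main obstacle, and the reason the positivity constraint makes this more delicate than the classical case, is the $\R$ (release) operator. In ordinary $\LTL$ one typically handles $\R$ via its definition $\varphi \R \psi \equiv \neg(\neg \varphi \U \neg \psi)$, but that route introduces negations and is unavailable here. Instead I would translate $\R$ directly from its positive semantics as given in the excerpt: $u,i \models \varphi \R \psi$ holds when either $\psi$ holds everywhere from $i$ onward, or $\psi$ holds up to and including some position where $\varphi$ also holds. This yields a negation-free disjunction, the global branch becoming $\forall y, (y < x \lor \toFO{\psi}[x:=y])$ and the witnessed branch an $\exists/\forall$ pattern analogous to the $\U$ case, again requantifying $x$ via the third variable to respect the three-variable bound. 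Once every operator is handled positively, a routine induction establishes the semantic equivalence, and I would verify at each step that substitution of variables and the rewriting of implications as disjunctions never reintroduce $\neg$, so that $\toFO{\varphi}$ genuinely lies in $\FOthp$.
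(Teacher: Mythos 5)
Your proposal is correct and follows essentially the same route as the paper's proof: the same syntax-directed translation with variable reuse, the same negation-free rendering of $\U$ as a disjunction $z < x \lor y \leq z \lor \dots$, and the same direct treatment of $\R$ from its positive semantics (a disjunction of the ``$\psi$ until $\psi\wedge\varphi$'' branch and the ``$\psi$ everywhere from here on'' branch) rather than via $\neg(\neg\varphi\U\neg\psi)$. Nothing further is needed.
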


The proof is direct, see Appendix \ref{app:LTLtoFO} for details. From $\LTLp$ to $\FOp$, we can interpret in $\FOp$ all constructors of $\LTLp$.

%The inclusion of $\FOthp$ in $\FOp$ is trivial.
%%%% -> on le raconte plus loin, quand on résume tout

% \begin{remark}
% In the context of (positive) $\FO$ or $\FOth$, we can replace $\succ(x,y)$ by $x < y \land \forall z, z \leq x \lor y \leq z $ so we can restrict ourselves to $\bin_<$ predicates.
% \end{remark}

Let us introduce definitions that will be used in the proof of the next lemma.

\begin{definition}
    Let $\qr(\varphi)$ be the quantification rank of a formula $\varphi$ of $\FOp$ defined inductively by:
    \begin{itemize}
        \item if $\varphi$ contains no quantifier then $\qr(\varphi) = 0$,
        \item if $\varphi$ is of the form $\exists x, \psi$ or $\forall x, \psi$ then $\qr(\varphi) = \qr(\psi) + 1$,
        \item if $\varphi$ is of the form $\psi \lor \chi$ or $\psi \land \chi$ then $\qr(\varphi) = \max(\qr(\psi),\qr(\chi))$.
    \end{itemize}
\end{definition}

\begin{definition}
    A separated formula is a positive Boolean combination of purely past formulas (which do not depend on the present and future), purely present formulas (which do not depend on the past and future) and purely future formulas (which do not depend on the past and present).
\end{definition}

We will adapt previous work to show the following auxiliary result:

\begin{lemma}\label{lem:separation}
    Let $\varphi$ be a $\TLp$ formula with possible nesting of past and future operators. There is a separated formula of $\TLp$ that is equivalent to $\varphi$.
\end{lemma}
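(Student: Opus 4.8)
The plan is to adapt the classical separation theorem for temporal logic (Gabbay, Pnueli, Shelah and Stavi) to the positive setting, proceeding by structural induction while carefully tracking that every rewriting step stays inside $\TLp$, i.e. introduces no negation. The decisive feature of $\TLp$ that makes this feasible is that its grammar contains not only $\U$ and $\S$ but also their duals $\R$ and $\Q$ (together with the derived $\F,\G,\P,\H$): these play the role that the negated-until and negated-since formulas play in the usual argument, so that the De Morgan manipulations of the classical proof acquire positive counterparts. I would run the induction not on the size of $\varphi$ but on a complexity measure that counts the nesting of future operators inside the scope of past operators and vice versa, since this is what the separation identities are designed to decrease.

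First I would dispatch the easy cases. Atoms and constants are purely present, hence already separated; and a positive Boolean combination of separated formulas is again separated, so $\wedge$ and $\vee$ pose no problem (and we may freely pass to disjunctive normal form, since taking DNF of a positive Boolean combination uses only $\wedge$ and $\vee$). After applying the induction hypothesis to the immediate subformulas, the inductive step therefore reduces to separating a formula of one of the shapes $\theta_1 \U \theta_2$, $\theta_1 \R \theta_2$, $\X \theta$ and their past duals $\theta_1 \S \theta_2$, $\theta_1 \Q \theta_2$, $\Y \theta$, where $\theta_1,\theta_2,\theta$ are already separated. Because the grammar and the semantics are symmetric under reversal of the word, the past connectives follow from the future ones by the mirror-image identities, so it suffices to treat $\U$, $\R$ and $\X$.

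Writing $\theta_1,\theta_2$ in DNF and grouping the conjuncts of each disjunct by type, each disjunct becomes $\pi \wedge \rho \wedge \phi$ with $\pi$ pure past, $\rho$ pure present and $\phi$ pure future. Using the positive distribution law $\varphi \U (\alpha \vee \beta) \equiv (\varphi \U \alpha) \vee (\varphi \U \beta)$ in the right argument, and the standard case analysis for the disjunctive structure of the left argument, this reduces the task to separating $\U$- and $\R$-formulas whose arguments are single type-grouped conjunctions. The heart of the matter is then a family of explicit separation identities that pull a pure-past subformula out of the scope of a future operator: at a future witness position $j \geq i$, a pure-past formula $\pi$ speaks about all positions $< j$, which split into the positions $\leq i$, fixed by the present at $i$, and the positions strictly between $i$ and $j$, expressible by future operators read from $i$. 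I would establish these identities by a sub-induction on the past operators occurring in $\pi$, producing for each a positive equivalent; the dual identities pull pure-future subformulas out of past operators.

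The main obstacle is precisely this decomposition step carried out positively. In the classical proof one freely complements the subformula being split; here each identity must instead be reproved so that its right-hand side is a positive Boolean combination of pure past, present and future formulas, which forces the systematic use of $\R$ and $\Q$ (and the derived $\F,\G,\P,\H$) wherever the classical argument would negate an until or a since, together with careful handling of the boundary behaviour of $\X$ and $\Y$ and of the reflexive versus strict readings of $\U$ and $\S$. Once all the required separation identities are available in positive form, the induction closes: each application strictly decreases the chosen nesting measure, so the rewriting terminates and yields a separated $\TLp$ formula equivalent to $\varphi$.
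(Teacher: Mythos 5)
Your route is genuinely different from the paper's in presentation, though not in substance. The paper does not reprove separation at all: it observes that the separation theorem of \cite{sep}, in the generalised form proved for cost $\LTL$ in \cite{FOtoLTL} (and rechecked in \cite{OliveiraR21}), already produces a separated formula that is positive whenever the input is, and it simply imports that result. You instead propose to rerun the Gabbay--Pnueli--Shelah--Stavi argument from scratch while tracking positivity. Your plan correctly isolates the two ingredients that make this possible: the presence of $\R$ and $\Q$ (and the derived $\F,\G,\P,\H$) in the $\TLp$ grammar as positive stand-ins for the negated untils and sinces of the classical proof, and an induction measure counting the nesting of past operators under future ones and vice versa. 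This is exactly why the cited results go through, so the approach is sound.

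The weakness is that your sketch stops at the only genuinely hard point. The entire content of the separation theorem lives in the family of explicit rewriting identities that extract a pure-past conjunct from under a $\U$ or $\R$ (and dually), and these are precisely the identities whose classical statements complement subformulas freely. You assert that each admits a positive reformulation and that a sub-induction on the past operators of $\pi$ produces it, but you exhibit none of them and do not verify even one case; the left-argument disjunction case for $\U$, which you dismiss as ``the standard case analysis,'' is itself one of the delicate steps when negation is unavailable. As written, your text is an accurate outline of what \cite{FOtoLTL} and \cite{OliveiraR21} actually carry out, rather than a proof; to make it self-contained you would need to state and check the positive separation identities explicitly, which is a substantial case analysis and is the reason the paper delegates this lemma to the literature.
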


\begin{proof}[Proof (Sketch)]

    Our starting point is the proof given by Kuperberg and Vanden Boom in \cite[lemma 5]{FOtoLTL}, which proves the equivalence between generalisations of the logics $\FO$ and $\LTL$, to the so-called cost $\FO$ and cost $\LTL$. When specialised to $\FO$ and $\LTL$, this corresponds to the case where negations appear only at the leaves of formulas. This brings us closer to our goal.

    First of all, \cite{FOtoLTL} proves a generalised version of the separation theorem from \cite{sep}.
    In \cite{sep}, it is proven that any formula of $\TL$ is equivalent to a separated formula, and a particular attention to positivity is additionally given in \cite{FOtoLTL}.
    Indeed \cite{FOtoLTL} also shows that such a Boolean combination can be constructed while preserving the formula's positivity.
    One can also check \cite{OliveiraR21} to verify that positivity of a formula is kept when separating the formula.
    Thus, a formula in $\TLp$ can be written as a Boolean combination of purely past, present and future formulas themselves in $\TLp$.
\end{proof}

Now we are ready to show the main result of this section:

\begin{lemma}\label{FOtoLTL}
    The set of languages described by $\FOp$ is included in the set of languages recognised by $\LTLp$.
\end{lemma}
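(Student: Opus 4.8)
The plan is to adapt the classical separation-based proof of Kamp's theorem to the positive setting, with the positive separation theorem \Cref{lem:separation} as the central tool. I would prove by induction on the structure of $\varphi$ the following statement: every $\FOp$ formula $\varphi$ with at most one free variable $x$ is equivalent to a $\TLp$ formula $\hat\varphi$, in the sense that $(u,[x\mapsto i]) \models \varphi$ iff $u,i\models\hat\varphi$ for every word $u$ and position $i$. Since $\TLp$ is closed under $\land$ and $\lor$, the connective cases are immediate; an atom $a(x)$ becomes the present letter $a$, while a binary atom $\b(x,x)$ is a constant $\top$ or $\bot$ ($x<x$ is $\bot$, $x\leq x$ is $\top$, and so on). This settles the base case and the Boolean cases.

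The core of the argument is the quantifier case, say $\exists y\,\psi(x,y)$; the universal case $\forall y\,\psi$ is handled dually, and crucially \emph{without} negation, by replacing the existential temporal wrappers $\F,\P$ below by their positive universal duals $\G,\H$, which are primitives of $\TLp$. I first split on the position of $y$ relative to $x$, writing $\exists y\,\psi \equiv (\exists y<x)\,\psi \,\lor\, \psi[y{:=}x] \,\lor\, (\exists y>x)\,\psi$; the middle disjunct has one free variable, so the induction hypothesis applies directly. For the future disjunct I introduce a fresh unary predicate $p$ marking the position of $x$, and translate $\psi(x,y)$, viewed from the position $y$, into a $\TLp$ formula $\chi$ over the extended alphabet in which every reference to $x$ is expressed through $p$ (for instance, in this disjunct $x$ lies strictly before $y$, so $x=y \mapsto \bot$, the primitive $x<y\mapsto\top$, $a(x)\mapsto \P(p\land a)$, and so on). The essential point, which distinguishes this from the classical proof, is that all $\FOp$ atoms, including the primitives $\neq$ and $\nsucc$, translate to formulas in which $p$ occurs \emph{only positively}, so that $\chi$ is a genuine $\TLp$ formula. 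Then $(\exists y>x)\,\psi$, evaluated at $x$, is equivalent to $\F\chi$ evaluated at $x$ under the assumption that $p$ marks exactly the position $x$.

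It then remains to eliminate $p$ and to descend from $\TLp$ to $\LTLp$. To remove $p$, I apply \Cref{lem:separation} to $\F\chi$, obtaining a positive Boolean combination of pure-past, pure-present and pure-future $\TLp$ formulas; since $p$ holds only at the evaluation point, $p$ is false in every pure-past and pure-future component and true in every pure-present one, so substituting the constants $\bot$ and $\top$ for these positive occurrences of $p$ yields an equivalent $\TLp$ formula over the original alphabet, \emph{preserving positivity}. Finally, a closed $\FOp$ sentence translates to a $\TLp$ formula $\hat\varphi$; applying \Cref{lem:separation} once more and evaluating at the initial position $0$, every pure-past subformula has a fixed truth value (the past of $0$ being empty) and can be replaced by $\top$ or $\bot$, leaving a positive Boolean combination of present and future formulas, i.e.\ an $\LTLp$ formula equivalent to $\varphi$. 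The main obstacle throughout is positivity: unlike the classical proof one can neither dualize $\forall$ through negation nor freely eliminate the marker $p$, so every step must be arranged so that $p$ occurs only positively and each appeal to separation stays inside $\TLp$ -- which is exactly what the positive version \Cref{lem:separation} guarantees.
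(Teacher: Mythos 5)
Your overall strategy --- induction with the quantifier case as the core, the positive separation theorem (\Cref{lem:separation}) as the key tool, and $\forall$ handled by the positive duals $\G,\H$ rather than by dualizing through negation --- is the same as the paper's. But you diverge at exactly the point where the paper's proof has its genuinely positivity-specific content. The paper handles the unary atoms $a_i(x)$ attached to the anchor variable $x$ by case-splitting over the set $S$ of predicates true at $x$, writing $\psi\equiv\bigvee_S\bigl(\bigwedge_{i\in S}a_i(x)\land\bigwedge_{i\notin S}\neg a_i(x)\land\psi^S\bigr)$, and then proving, \emph{using the monotonicity of $\FOp$ formulas}, that the negative literals can be dropped. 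This is what lets the induction close: in $\psi^S$ every $a_i(x)$ is a constant, the binary atoms $\b(x,z)$ become genuine unary \emph{atoms} $\P_{\b}(z)$ over an extended alphabet, so $\psi'^S(y)$ is an $\FOp$ formula of strictly smaller quantification rank and the induction hypothesis applies; the $\P_\b$ are then eliminated by separation because they are positional. You replace all of this by a marker predicate $p$ and rewrite every reference to $x$ through $p$ (e.g.\ $a(x)\mapsto\P(p\land a)$), which lets you skip the monotonicity argument entirely.

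The gap is that the object you hand to the induction hypothesis is no longer in its domain. The IH takes an $\FOp$ formula with one free variable that is smaller (structurally or in $\qr$); but $\psi$ with its $x$-atoms rewritten through $p$ is either (i) not an $\FOp$ formula at all, if you literally substitute $\TLp$ formulas such as $\P(p\land a)$ for atoms, or (ii) an $\FOp$ formula of \emph{larger} rank and size, if you substitute $\exists w\,(p(w)\land a(w)\land\cdots)$ --- each atom acquires a fresh quantifier nested under all the quantifiers of $\psi$, so neither structural induction nor induction on $\qr$ goes through. This is repairable (strengthen the IH to formulas over an extended signature whose ``generalized atoms'' carry fixed $\TLp$ translations, and re-verify every case, including $\succ(x,z)$ and $\nsucc(x,z)$ for bound variables $z$ deeper in $\psi$, which your three-way split on $y$ does not settle), but that bookkeeping is exactly what your sketch omits and what the paper's $S$-split-plus-monotonicity step is designed to avoid. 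A smaller point: positivity of $p$ matters so that the rewritten formula stays inside $\FOp$/$\TLp$ and \Cref{lem:separation} applies, not for the final substitution of constants for $p$, which is sound regardless of polarity once the value of $p$ is known at every position a separated component can inspect.
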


\begin{proof}
    % Note that the $\TL$ logic used in the article is extended with quantitative constructors, but we can ignore them.

    We follow \cite{FOtoLTL}, which shows a translation from $\FO$ to $\TL$ by induction on the quantification rank. We have adapted this to suit our needs.

    Let $\varphi(x)$ be an $\FOp$ formula with a single free variable. Let us show by induction on $\qr(\varphi)$ that $\varphi$ is equivalent to a formula of $\TLp$.

    \underline{Initialisation:}

    If $\qr(\varphi)$ is zero, then $\varphi(x)$ translates directly into the $\TLp$ formula. Indeed, disjunctions and conjunctions translate immediately into $\TLp$. Furthermore, unary predicates of the form $a(x)$ translate into $a$ and binary predicates trivialize into $\top$ and $\bot$ (e.g. $x<x$ translates into $\bot$ and $x=x$ into $\top$). For example, $(x \leq x \land a(x)) \lor (b(x) \land c(x)) \lor x < x$ translates into $(\top \land a) \lor (b \land c) \lor \bot$.

    \underline{Heredity:}

    Suppose that any $\FOp$ free single-variable formula of quantification rank strictly less than $\qr(\varphi)$ translates into a $\TLp$ formula, and $\qr(\varphi)$ is strictly positive.

    If $\varphi$ is a disjunction or conjunction, we need to transform its various clauses.
    So, without loss of generality, let us assume that $\varphi(x)$ is of the form $\exists y, \psi(x,y)$ or $\forall y, \psi(x,y)$.

    Let us denote $a_1$, ... $a_n$ where $n$ is a natural number, the letters (which are considered as unary predicates) in $\psi(x,y)$ applied to $x$.

    For any subset $S$ of $[\![1,n]\!]$, we note $\psi^S(x,y)$ the formula $\psi(x,y)$ in which each occurrence of $a_i(x)$ is replaced by $\top$ if $i$ belongs to $S$ and by $\bot$ otherwise, for any integer $i$ of $[\![1,n]\!]$.

    We then have the logical equivalence:
    $$
    \psi(x,y) \equiv \bigvee_{S \subseteq [\![1,n]\!]}
    \left(
        \bigwedge_{i \in S} 
        a_i(x) \land \bigwedge_{i \notin S} \neg a_i(x)
        \land \psi^S(x,y)
    \right).
    $$

    We are going to show that the negations in the above formula are optional.
    Let us note:
    $$
    \psi^+(x,y) \equiv \bigvee_{S \subseteq [\![1,n]\!]}
    \left(
        \bigwedge_{i \in S} a_i(x)
    \land \psi^S(x,y)
    \right).
    $$

    Let us then show the equivalence of the formulas $\psi(x,y)$ and $\psi^+(x,y)$ using the monotonicity of $\psi$ as an $\FOp$ formula. 
    First of all, it is clear that any model satisfying $\psi(x,y)$ satisfies $\psi^+(x,y)$.

    Conversely, suppose $\psi^+(x,y)$ is satisfied. We then have a subset $S$ of $[\![1,n]\!]$ such that $(\wedge_{i \in S} a_i(x))
    \land \psi^S(x,y)$ is satisfied. In particular, according to the values taken by the unary predicates in $x$, there exists a subset $S'$ of $[\![1,n]\!]$ containing $S$ such that $(\wedge_{i \in S'} a_i(x)) \land (\wedge_{i \notin S'} \neg a_i(x))
    \land \psi^S(x,y)$ is satisfied. Now, $\psi$ is monotone in the different predicates $a_1$,...,$a_n$. So $(\wedge_{i \in S'} a_i(x)) \land (\wedge_{i \notin S'} \neg a_i(x))
    \land \psi^{S'}(x,y)$ is also satisfied, and $\psi(x,y)$ is therefore satisfied.

    The rest of the proof is similar to the proof from \cite{FOtoLTL}:
    the quantifiers on $y$ commute with the disjunction on $S$ and the conjunction on $i$ of the formula $\psi^+$. We can therefore fix a subset $S$ of $[\![1,n]\!]$ and simply consider $\exists y, \psi^S(x,y)$ or $\forall y, \psi^S(x,y)$.
    We then replace $\psi^S(x,y)$ with a formula that depends only on $y$ by replacing each binary predicate of the form $\b(x,z)$ with a unary predicate $\P_{\b}(z)$.
    For example, we can replace $x<z$, $z<x$ or $x=z$ by a unary predicate $\P_>(z)$, $\P_<(z)$ or $\P_=(z)$. 
    We then obtain a formula $\psi'^{S}(y)$ on which we can apply the induction hypothesis (since there is only one free variable). This yields a formula $\chi$ from $\TLp$, equivalent to $\psi'^{S}(y)$ and we have:
    $$
    \exists y, \psi^{S}(x,y) \equiv \P \chi \lor \chi \lor \F \chi,
   \text{~~~and~~~}
    \forall y, \psi^{S}(x,y) \equiv \H \chi \land \chi \land \G \chi.
    $$

    Let $\chi'$ be the formula obtained ($\P \chi \lor \chi \lor \F \chi$ or $\H \chi \land \chi \land \G \chi$).
    The resulting formula $\chi'$ then involves unary predicates of the form $\P_{\b}$. We then use \Cref{lem:separation} to transform $\chi'$ into a positive Boolean combination of purely past, present and future positive formulas, where predicates $\P_\b$ trivialize into $\top$ or $\bot$.
    For example, $\P_<$ trivializes into $\top$ in purely past formulas, into $\bot$ in purely present or future formulas.

    This completes the induction. From a formula in $\FOp$, we can construct an equivalent formula in $\TLp$.

    Ultimately, we can return to a future formula. Indeed, we want to evaluate in $x=0$, so the purely past formulas, isolated by the separation lemma (\Cref{lem:separation}), trivialize into $\bot$ or $\top$.

    Now, to translate a closed formula $\varphi$ from $\FOp$ to $\LTLp$, we can add a free variable by setting $\varphi'(x) = \varphi \land (x=0)$. Then, by the above, $\varphi'$ translates into a formula $\chi$ from $\LTLp$, logically equivalent to $\varphi$.

\end{proof}
We can now turn to the proof of \Cref{equiv}.

\begin{proof}[Proof of \Cref{equiv}]
    By \Cref{LTLtoFO}, we have the inclusion of the languages described by $\LTLp$ in those described by $\FOthp$, which is trivially included in $\FOp$. By \Cref{FOtoLTL}, the converse inclusion of $\FOp$ into $\LTLp$ holds. So we can conclude that the three logical formalisms are equi-expressive.
\end{proof}

\subsection{\texorpdfstring{Equivalences in fragments of $\FOp$}{Equivalences in fragments of FO+}} \label{sec:equivfrag}

\begin{theorem}\label{equiv2}
    The languages described by $\FOtwp[\bin_0]$ formulas with one free variable are exactly those described by $\UTLp$ formulas.
\end{theorem}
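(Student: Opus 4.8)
The plan is to establish the two inclusions separately, reusing the machinery of \Cref{FOtoLTL} but exploiting the two-variable restriction to keep the temporal side unary. For the easy inclusion $\UTLp \subseteq \FOtwp[\bin_0]$, I would proceed by a routine structural induction, translating a $\UTLp$ formula evaluated at the current position into an $\FOtwp[\bin_0]$ formula with one free variable $x$: atoms $a$ become $a(x)$, $\wedge$ and $\vee$ translate directly, and each unary modality becomes a single quantifier over the other variable, e.g. $\X\varphi \mapsto \exists y(\succ(x,y)\wedge\widehat\varphi(y))$, $\F\varphi\mapsto\exists y(x<y\wedge\widehat\varphi(y))$, and dually for $\Y,\P$ using $\succ(y,x)$ and $y<x$. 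Since only unary modalities occur, one spare variable always suffices and the two variables can be reused across the nesting. The one point requiring care is positivity of the universal operators: $\G\varphi$ becomes $\forall y((y\leq x)\vee\widehat\varphi(y))$ and $\H\varphi$ becomes $\forall y((x\leq y)\vee\widehat\varphi(y))$, which are negation-free because $\leq\,\in\bin_0$ and the monotonicity constraint only concerns unary predicates.

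The substantial direction is $\FOtwp[\bin_0]\subseteq\UTLp$. Here I would mimic the induction on quantifier rank from \Cref{FOtoLTL}, translating single-free-variable $\FOtwp[\bin_0]$ formulas into $\UTLp$. Given $\varphi(x)=\exists y\,\psi(x,y)$ (the $\forall$ case being dual, and Boolean combinations handled clause by clause), I first perform the same case split on the unary type of the free variable $x$ as in \Cref{FOtoLTL}, using monotonicity of positive formulas to discard the negative literals $\neg a_i(x)$ and thereby preserve positivity. The crucial use of the two-variable restriction comes next: in $\psi(x,y)$ the only binary atoms relating distinct quantified positions are those between $x$ and $y$, so I split according to the order type $\tau$ of the pair $(x,y)$, which under $\bin_0$ is one of the five possibilities $y<x-1$, $y=x-1$, $y=x$, $y=x+1$, $y>x+1$. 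Fixing $\tau$ turns every binary atom into a constant $\top$ or $\bot$, so $\psi$ collapses to a formula $\psi_\tau(y)$ with the single free variable $y$ and strictly smaller quantifier rank (inner quantifiers re-binding $x$ are shadowed and handled at their own level of the induction), to which the induction hypothesis applies, yielding a $\UTLp$ formula $\chi_\tau$ equivalent to $\psi_\tau(y)$.

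Finally, I would recombine the $\chi_\tau$ using only unary operators: ``$\exists y$ of order type $\tau$ with $\chi_\tau$'' is $\chi_=$ for the present, $\X\chi$ and $\Y\chi$ for the immediate successor and predecessor, and $\X\F\chi$, $\Y\P\chi$ for the strict future and past beyond the neighbour; the $\forall$ case dually uses $\G,\H$ together with $\X,\Y$. Since we only ever locate $y$ in a ray or a point relative to the single reference position $x$, no ``strictly between two quantified positions'' constraint is ever needed, which is exactly why $\U$ and $\S$ do not appear and the output lands in $\UTLp$ rather than full $\LTLp$; for the same reason the elimination of residual predicates $\P_\b$ via \Cref{lem:separation} is unnecessary here, as the order-type split removes all binary atoms before translating. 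The main obstacle, and the step I would write most carefully, is precisely this bookkeeping: verifying that fixing the order type genuinely reduces $\psi$ to a single-free-variable formula of strictly smaller quantifier rank in the presence of variable reuse and shadowing, and confirming that each order-type class is captured by a unary-operator combination, which is what pins the translation inside $\UTLp$.
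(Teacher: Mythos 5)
Your proposal is correct and follows essentially the same route as the paper: the easy direction is the same clause-by-clause translation (with $\G,\H$ handled positively via $\forall y\,(y\leq x \vee \cdots)$), and the hard direction uses the same unary-type case split with monotonicity to drop negative literals, followed by the same partition into five order types of $(x,y)$ that trivializes all binary atoms and is then recaptured by $\chi$, $\X\chi$, $\Y\chi$, $\X\F\chi/\X\G\chi$, $\Y\P\chi/\Y\H\chi$. Your observation that the order-type split makes the separation lemma unnecessary is exactly the point the paper makes when departing from the proof of \Cref{FOtoLTL}.
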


\begin{proof}
    First, let us show the $\UTLp$ to $\FOtwp$ direction. In the proof of \Cref{LTLtoFO}, as is classical, three variables are introduced only when translating $\U$. By the same reasoning as for $\X$, it is clear that translating $\Y$ introduces two variables. It remains to complete the induction of \Cref{LTLtoFO} with the cases of $\P$, $\F$, $\H$ and $\G$, but again we can restrict ourselves to future operators by symmetry:
    
    \begin{itemize}
        \item $[\F\varphi](x) = \exists y, x < y \land [\varphi](y)$ ;
        \item $[\G\varphi](x) = \forall y, y \leq x \lor [\varphi](y)$.
    \end{itemize}

    For the converse direction from $\FOtwp$ to $\UTLp$, we draw inspiration from \cite[Theorem 1]{UTL}. This proof is similar to that of \cite{FOtoLTL} used previously in the proof of \Cref{FOtoLTL}: we perform a disjunction on the different valuations of unary predicates in one free variable to build a formula with one free variable. However, the proof of \Cref{FOtoLTL} cannot be adapted as it is, since it uses the separation theorem which does not preserve the membership of a formula to $\UTL$, see \cite[Lem 9.2.2]{tempLogicBook}. However, the article \cite{UTL} uses negations and we must therefore construct our own induction case for the universal quantifier that is treated in \cite{UTL} via negations.

    The beginning of the proof is identical to that of \Cref{FOtoLTL}. Using the same notations, let us consider a formula $\psi^S(x,y)$ with no unary predicate applied to $x$. We cannot directly replace binary predicates with unary predicates, because this relied on the separation theorem.

    Let us consider, as in \cite{UTL}, the \emph{position formulas}, $y<x \land \nsucc(y,x)$, $\succ(y,x)$, $y=x$, $\succ(x,y)$ and $x < y \land \nsucc(x,y)$, whose set is denoted $\Tau$.

    %%%%
    %Note that the negations of position formulas are disjunctions of position formulas, and thus equivalent to $\FOtwp$ formulas.
    %\denis{is it necessary ?}
    We then have the logical equivalence:
    %(where the use of implication in $\FOtwp$ is justified by the previous remark):

    $$
    \psi^S(x,y)
    \equiv \bigvee_{\tau \in \Tau} \tau(x,y) \land \psi_{\tau}^S(y)
    \equiv \bigwedge_{\tau \in \Tau} \tau(x,y) \implies \psi_{\tau}^S(y),
    $$

    where $\psi_{\tau}^S(y)$ is obtained from the formula $\psi^S(x,y)$ assuming the relative positions of $x$ and $y$ are described by $\tau$. The above equivalence holds because $\Tau$ forms a partition of the possibilities for the relative positions of $x$ and $y$: exactly one of the five formulas $\tau(x,y)$ from $\Tau$ must hold. Since $x$ and $y$ are the only two variables, any binary predicate involving $x$ is a binary predicate involving $x$ and $y$ (or else it involves only $x$ and is trivial). Binary predicates are therefore trivialized according to the position described by $\tau$.
    
    \begin{example}
        For $\psi^S(x,y) = \nsucc(x,y) \land a(y) \land (\forall x, x \leq y \lor b(y))$ and for the position formula $\tau = y<x \land \nsucc(y, x)$, we have $\psi_{\tau}^S(y) = \top \land a(y) \land (\forall x, x \leq y \lor b(y))$. We do not replace the bound variable $x$. We have obtained a formula with one free variable, so we can indeed use the induction hypothesis.
    \end{example}

    We use disjunction in the case of an existential quantifier (as in \cite{UTL}) and conjunction in the case of a universal quantifier.
    We then need to translate $\exists y, \tau(x,y) \land \psi_{\tau}^S(y)$ and $\forall y, \tau(x,y) \implies \psi_{\tau}^S(y)$, which we note respectively $[\tau]_{\exists}$ and $[\tau]_{\forall}$, in $\UTLp$, for any position formula $\tau$. For readability we omit $\psi_{\tau}^S$ in this notation, but $[\tau]_{\exists}$ and $[\tau]_{\forall}$ will depend on $\psi_{\tau}^S$.
    In each case, we note $\chi$ for the $\UTLp$ formula obtained by induction from $\psi_{\tau}^S(y)$:
    $$
    \def\arraystretch{1.5}
    \begin{array}{c}
    [y<x \land \nsucc(y,x)]_{\exists} \equiv \Y \P \chi,\\   
    ~[y<x \land \nsucc(y,x)]_{\forall} \equiv \Y \H \chi,~\\ 
    ~ [\succ(y,x)]_{\exists} \equiv \succ(y,x)]_{\forall} \equiv \Y \chi,~\\ 
     ~[y=x]_{\exists} \equiv [y=x]_{\forall} \equiv \chi,    \\ 
    ~[\succ(x,y)]_{\exists} \equiv \succ(x,y)]_{\forall} \equiv \X \chi,~\\ 
   ~[x < y \land \nsucc(x,y)]_{\exists} \equiv \X \F \chi,~\\ 
    ~[x < y \land \nsucc(x,y)]_{\forall} \equiv \X \G \chi.~
    \end{array}
  $$ 
\end{proof}

\begin{corollary} \label{coro:equiv2withoutX}
    The logic $\FOtwp[\bin_<]$ is equivalent to $\UTLp[\P,\F,\H,\G]$.
\end{corollary}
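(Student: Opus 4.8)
The plan is to re-run the proof of \Cref{equiv2} essentially verbatim, tracking precisely where the successor predicates $\succ,\nsucc$ and the operators $\X,\Y$ intervene, and to observe that removing one side of the correspondence forces the removal of the other. Both inclusions are obtained as specializations of the two directions already established for \Cref{equiv2}.

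First I would treat the direction $\UTLp[\P,\F,\H,\G] \to \FOtwp[\bin_<]$. In the proof of \Cref{equiv2}, the predicate $\succ$ enters the first-order translation \emph{only} through the operators $\X$ and $\Y$ (via clauses such as $[\succ(x,y)]_{\exists} \equiv \X\chi$). The four remaining operators translate using only $<$ and $\le$: the clauses $[\F\varphi](x) = \exists y, x<y \land [\varphi](y)$ and $[\G\varphi](x) = \forall y, y\le x \lor [\varphi](y)$ are already exhibited, and by the past/future symmetry one has $[\P\varphi](x) = \exists y, y<x \land [\varphi](y)$ and $[\H\varphi](x) = \forall y, x\le y \lor [\varphi](y)$. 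Since a $\UTLp[\P,\F,\H,\G]$ formula contains no $\X$ or $\Y$, the induction never appeals to $\succ$, so the resulting $\FOtwp$ formula lies in $\FOtwp[\bin_<]$.

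Next, for the direction $\FOtwp[\bin_<] \to \UTLp[\P,\F,\H,\G]$, I would follow the backward direction of \Cref{equiv2}, the only change being the set of position formulas $\Tau$. With two variables and atomic binary predicates drawn only from $\bin_<$—so that $\succ,\nsucc$ are \emph{unavailable}, since by the Remark they would require a third variable—the relative position of $x$ and $y$ can be tested only through $<$ and $\le$, and is therefore determined by one of the three mutually exclusive formulas $y<x$, $y=x$, $x<y$. The partition argument then gives
$$\psi^S(x,y) \equiv \bigvee_{\tau} \tau(x,y) \land \psi_{\tau}^S(y) \equiv \bigwedge_{\tau} \tau(x,y) \implies \psi_{\tau}^S(y)$$
over these three cells, and the translation table of \Cref{equiv2} collapses to $[y<x]_{\exists} \equiv \P\chi$, $[y<x]_{\forall} \equiv \H\chi$, $[y=x]_{\exists} \equiv [y=x]_{\forall} \equiv \chi$, $[x<y]_{\exists} \equiv \F\chi$, and $[x<y]_{\forall} \equiv \G\chi$. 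None of these uses $\X$ or $\Y$, so the output lies in $\UTLp[\P,\F,\H,\G]$; closed formulas are handled as in \Cref{FOtoLTL} by adjoining the conjunct $x=0$.

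The only point requiring care—rather than a genuine obstacle—is justifying the collapse of $\Tau$ from five cells to three: one must argue that with only $\bin_<$ and two variables no formula can separate ``immediate successor'' from ``strictly later''. This is immediate, since the sole atomic tests available between $x$ and $y$ are $x<y$, $y<x$, $x\le y$, $y\le x$ (with $=$ and $\neq$ derived), all of which merely refine the three-way comparison. Everything else—the bookkeeping over valuations of the unary predicates on the free variable, the commutation of quantifiers with the disjunction and conjunction, and the appeal to the induction hypothesis—carries over unchanged from \Cref{equiv2}.
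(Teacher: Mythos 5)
Your proof is correct and follows essentially the same route as the paper: the right-to-left direction is observed to use only predicates from $\bin_<$, and the left-to-right direction replaces the five-element set of position formulas $\Tau$ by the three-element partition $\{y<x,\ y=x,\ x<y\}$ and reruns the induction of \Cref{equiv2}. Your additional remark justifying why the five cells collapse to three (no atomic test over $\bin_<$ with two variables can isolate the successor relation) is a reasonable elaboration of what the paper leaves implicit.
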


\begin{proof}
    For the right-to-left direction, it suffices to notice that the predicates used to translate the constructors of $\UTLp[\P,\F,\H,\G]$ in the previous proof belong to $\bin_<$.

    For the left-to-right direction, simply replace the set $\Tau$ in \Cref{equiv2} proof by $\Tau' = \{ y<x, y=x, x<y \}$. Once again, we obtain an exhaustive system of mutually exclusive position formulas that allow us to trivialize binary predicates. The proof of \Cref{equiv2} can thus be lifted immediately to this case.
\end{proof}

We showed that several classical logical equivalence results can be transposed to their positive variants.

\section{Characterisation of monotonicity} \label{sec:monoids}

So far, we have focused on languages described by positive formulas, from which monotonicity follows. Here, we focus on the monotonicity property and propose a characterisation. We then derive a monoid-based algorithm that decides, given a regular language $L$, whether it is monotone, refining results from \cite{PFO} focusing on automata-based algorithms.

\subsection{Characterisation by monoids} \label{sec:charac}

We assume the reader familiar with monoids (see Appendix \ref{ap:defmon} for detailed definitions).

We will note $(\M,\cdot)$ a monoid and $\M_L$ the syntactic monoid of a regular language $L$ and $\leq_L$ the syntactic order.

%In the following, all languages considered are defined on the alphabet $\alph$. We extend the order $\subseteq$ into an order $\leqa$ on $\alph^*$ defined on word pairs $(u_0...u_n,v_0...v_n)$ of the same length by $u_0...u_n \leqa v_0...v_n$ if $u_0 \subseteq v_0$, ..., $u_n \subseteq v_n$.

\begin{lemma}\label{ordrmono}
    Let $L\subseteq \alph^*$ be a regular language. Then $L$ is monotone if and only if there is an order $\leqml$ on $\M_L$ compatible with the product $\cdot$ and included in $\leq_L$ which verifies:
    $$
    \forall (u,v) \in \alph^* \times \alph^*, u \leqa v \implies h(u) \leqml h(v),
    $$

    where $h$ denotes the canonical projection.

    %$$
    %\forall m \in h(L), \forall n \in \M, m \leqml n \implies n \in %h(L),
    %$$ \denis{remplacer ca par "included in syntactic order" directement, au lieu de l'avoir ensuite en corollaire ?}
\end{lemma}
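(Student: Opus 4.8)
The plan is to prove the two implications by unwinding the definitions of the syntactic order, of monotonicity, and of the order $\leqa$ on words, the one elementary ingredient driving both directions being that $\leqa$ is compatible with concatenation: if $u \leqa v$ then $puq \leqa pvq$ for all words $p,q$ (equal length and letterwise inclusion are preserved when prepending or appending equal factors). I work with the convention under which $\leq_L$ makes the accepting set $P := h(L)$ upward closed, that is, $s \leq_L t$ iff $xsy \in P \Rightarrow xty \in P$ for all $x,y \in \M_L$; taking $x,y$ to be the identity, this gives in particular that $s \in P$ and $s \leq_L t$ imply $t \in P$.

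For the \emph{if} direction, suppose such an order $\leqml$ exists and let $u \in L$ with $u \leqa v$. The hypothesis on $\leqml$ gives $h(u) \leqml h(v)$, and since $\leqml$ is included in $\leq_L$ we get $h(u) \leq_L h(v)$. Now $u \in L$ means $h(u) \in P$, so the upward closure of $P$ recalled above yields $h(v) \in P$, i.e.\ $v \in L$. Thus $L$ is monotone.

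For the \emph{only if} direction, I claim that $\leq_L$ itself can be taken as $\leqml$. Being the order of the syntactic ordered monoid, $\leq_L$ is a partial order compatible with the product, and it is trivially contained in itself, so the compatibility and inclusion requirements are immediate. The remaining point is the displayed implication $u \leqa v \Rightarrow h(u) \leq_L h(v)$: given any context $(p,q)$ with $puq \in L$, concatenation-compatibility gives $puq \leqa pvq$, and monotonicity of $L$ then forces $pvq \in L$; as this holds for every context, $h(u) \leq_L h(v)$ by definition of the syntactic order.

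There is no deep obstacle here; the points requiring care are purely definitional: fixing the orientation convention of $\leq_L$ so that the displayed inequality points the right way, and checking the product-compatibility of $\leqa$. It is worth noting that $\leq_L$ is the \emph{largest} admissible witness, while the \emph{smallest} one is the transitive closure of $\{(h(u),h(v)) : u \leqa v\}$, which is reflexive and stable under the product (again by concatenation-compatibility), hence a product-compatible preorder satisfying the displayed condition, and which lies inside $\leq_L$ by the computation just given. In either case antisymmetry comes for free, since any relation contained in the antisymmetric order $\leq_L$ is itself antisymmetric, so one indeed obtains an order and not merely a preorder.
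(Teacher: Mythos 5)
Your proof is correct, but the forward (only-if) direction takes a genuinely different route from the paper's. Where you instantiate $\leqml$ with the \emph{largest} admissible witness, namely the syntactic order $\leq_L$ itself --- so that the only thing left to check is $u \leqa v \Rightarrow h(u) \leq_L h(v)$, which falls out of monotonicity of $L$ plus compatibility of $\leqa$ with concatenation --- the paper constructs the \emph{smallest} witness: the transitive closure $\to^*$ of the relation induced by $\leqa$ on $\M_L$. The paper then has to verify by hand that $\to^*$ is antisymmetric (which it does exactly as you note, by showing $\to^*\ \subseteq\ \leq_L$ via a chain $m \to m_1 \to \cdots \to n$ and contexts $x,y$) and that it is compatible with the product (another chain argument). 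Your version is shorter and pushes all the work onto standard properties of the syntactic ordered monoid; it also matches how the lemma is actually consumed later, since the proof of \Cref{caracmonot} instantiates the lemma with $\leq_L$ anyway. The paper's construction buys the extra information that the minimal product-compatible order containing the image of $\leqa$ already sits inside $\leq_L$, which your closing remark recovers as a side observation. Both arguments rely on the same two definitional facts (orientation of $\leq_L$ so that $h(L)$ is upward closed, and concatenation-compatibility of $\leqa$), and you handle both correctly.
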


The proof is left in \Cref{ap:proofordrmono}.

%A corollary can be deduced from the previous proof.

% \begin{corollary}
%     The order $\leqml$ thus defined is included in the syntactic order $\leq_L$.
% \end{corollary}

\begin{theorem}\label{caracmonot}
    Let $L\subseteq\alpha^*$ be a regular language, and $\leq_L$ be its syntactic order. The language $L$ is monotone if and only if we have:
    $$
    \forall (s,s') \in \alph^2, s \subseteq s' \implies h(s) \leq_L h(s'),
    $$
    where $h:\alph^*\to M_L$ denotes the canonical projection onto the syntactic monoid.
\end{theorem}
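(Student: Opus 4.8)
The plan is to derive this theorem directly from \Cref{ordrmono}, which already characterises monotonicity via the existence of a compatible order $\leqml$ on $\M_L$ satisfying the lifting condition $u \leqa v \implies h(u) \leqml h(v)$. The key observation is that the criterion in \Cref{ordrmono} quantifies over \emph{all} pairs of words, whereas the present theorem reduces this to a finite, purely local condition on single letters $s \subseteq s'$ in $\alph = \part(\Sigma)$, using the syntactic order $\leq_L$ itself rather than an abstract $\leqml$. So the content of the proof is to show that these two formulations coincide.

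First I would prove the forward direction. Assume $L$ is monotone, and let $s \subseteq s'$ be two letters. Viewing $s$ and $s'$ as length-one words, $s \subseteq s'$ gives exactly $s \leqa s'$. By \Cref{ordrmono} there is a compatible order $\leqml \subseteq \leq_L$ with $h(s) \leqml h(s')$, and since $\leqml$ is included in $\leq_L$ we immediately get $h(s) \leq_L h(s')$, which is the desired conclusion.

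The converse is the direction carrying the real work. Assume $\forall s \subseteq s',\ h(s) \leq_L h(s')$; I want to conclude that $L$ is monotone, for which by \Cref{ordrmono} it suffices to exhibit a suitable compatible order included in $\leq_L$ and lifting $\leqa$. The natural candidate is the syntactic order $\leq_L$ itself, which is by construction compatible with $\cdot$ and trivially included in itself. So the only thing to verify is the lifting property: $u \leqa v \implies h(u) \leq_L h(v)$. Here I would take $u = s_1 \cdots s_n$ and $v = s_1' \cdots s_n'$ of equal length with $s_i \subseteq s_i'$ for every $i$, and argue letter by letter. The hypothesis gives $h(s_i) \leq_L h(s_i')$ for each $i$, and then compatibility of $\leq_L$ with the product lets me replace the factors one at a time, using transitivity to chain the inequalities:
$$
h(u) = h(s_1)\cdots h(s_n) \leq_L h(s_1')h(s_2)\cdots h(s_n) \leq_L \cdots \leq_L h(s_1')\cdots h(s_n') = h(v).
$$
This yields $h(u) \leq_L h(v)$, completing the lifting condition and hence, by \Cref{ordrmono}, the monotonicity of $L$.

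The main obstacle I anticipate is purely that the compatibility of $\leq_L$ with the product must be invoked in its two-sided form to substitute factors one position at a time (left- and right-multiplication by the already-substituted prefix and the not-yet-substituted suffix must both preserve the order); this is a standard property of the syntactic order, but it is the step that makes the telescoping chain above valid, and it should be stated explicitly. Everything else — identifying single letters with length-one words so that $\subseteq$ becomes $\leqa$, and reading off inclusion in $\leq_L$ in the forward direction — is routine once \Cref{ordrmono} is in hand.
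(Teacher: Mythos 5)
Your proof is correct and follows essentially the same route as the paper: both directions ultimately rest on \Cref{ordrmono} together with the compatibility of the syntactic order $\leq_L$ with the product, and your spelled-out letter-by-letter telescoping argument for the lifting condition $u \leqa v \implies h(u)\leq_L h(v)$ is exactly what the paper's terser converse leaves implicit. The only cosmetic difference is in the forward direction, where the paper unfolds the definition of $\leq_L$ directly (choosing preimages of $m,n$ under $h$ and applying monotonicity of $L$ to $usv\in L$) rather than invoking the order $\leqml$ supplied by \Cref{ordrmono} and its inclusion in $\leq_L$; both arguments are valid.
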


\begin{proof}
    For the left-to-right direction let $L$ be a monotone language and $s \subset s'$.
    Let $m$ and $n$ be two elements of $\M_L$ such that $mh(s)n \in h(L)$.
    Since $h : \alph^* \to \M_L$ is surjective, let $u \in h^{-1}(m)$ and $v \in h^{-1}(n)$.
    Then $usv \in L$ since $h$ recognises $L$.
    So $us'v \in L$ by monotonicity of $L$.
    Thus $mh(s')n \in h(L)$.
    We can conlude that $h(s) \leq_L h(s')$.

    %For the converse direction, it suffices to note that since $\leq_L$ is compatible with the product, if $\leq_L$ verifies the condition of \Cref{caracmonot}, then it verifies the conditions of \Cref{ordrmono} (in fact, only the fact that the order $\leqa$ induces through $h$ the order $\leq_L$ is not trivial \textit{a priori}). \denis{formulation pas claire... a reformuler}

    For the converse direction, suppose that $\leq_L$ verifies the condition of \Cref{caracmonot}.
    We can remark that $\leq_L$ is compatible with the product of the monoid.
    Therefore, the conditions of \Cref{ordrmono} are verified by $\leq_L$.
    %Indeed, if a word $u$ is lesser than a word $v$ then $h(u)$ is lesser than $h(v)$ by compatibility with the product.
    %The second definition comes from the definition of $\leq_L$.
    
\end{proof}

\subsection{An algorithm to decide monotonicity} \label{sec:algo}

We immediately deduce from \Cref{caracmonot} an algorithm for deciding the monotonicity of a regular language $L$ from its syntactic monoid. Indeed, it is sufficient to check for any pair of letters $(s,s')$ such that $s$ is included in $s'$ whether $m \op h(s) \op n \in h(L)$ implies $m \op h(s') \op n \in h(L)$ for any pair $(m,n)$ of elements of the syntactic monoid, where $h$ denotes the canonical projection onto the syntactic monoid.

This algorithm works for any monoid that recognises $L$ through a surjective $h:\alph^*\to M$, not just its syntactic monoid.
Indeed, for any monoid, we start by restricting it to $h(\alph^*)$ to guarantee that $h$ is surjective.
Then, checking the above implication is equivalent to checking whether $s \leq_L s'$ for all letters $s$ and $s'$ such that $s$ is included in $s'$.

% \begin{algorithm}[H]

%     \caption{Monotonicity decision}\label{alg:algmon}

%     \BlankLine

%     \KwIn
%     {
%         The set of unary predicates $\Sigma = \{a,b,c,...\}$,
        
%         $(\M,\op)$, a monoid recognising a regular language $L$ through a morphism $h$,

%         the images of the elements belonging to $\alph$ by $h$,
        
%         $F$, the image of $L$ by $h$.
%     }

%     \BlankLine

%     \For{$S,S' \in \alph$}
%     {
%         \If{$S \subseteq S'$}
%          {
%             \For{$(m,n) \in \M^2$}
%             {
%                  \If{$m \op h(S) \op n \in F \land m \op h(S') \op n \notin F$}
%                 {
%                  \KwRet{$\bot$}
%                 }
%              }
%          }
%     }

%     \KwRet{$\top$}

%     \BlankLine
    
% \end{algorithm}

% \begin{proposition}
%     The number of products (for the operation $\op$ of the monoid) performed by the algorithm \ref{alg:algmon} is in $O(|\alph|^2|\M|^2)$.
% \end{proposition}

This is summarised in the following proposition:

\begin{proposition} \label{prop:algo}
    There is an algorithm which takes as input a monoid $(\M,\op)$ recognising a regular language $L$ through a morphism $h$ and decides whether $L$ is monotone in $O(|\alph|^2|\M|^2)$.
\end{proposition}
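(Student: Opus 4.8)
The plan is to give an explicit algorithm realizing the characterization of \Cref{caracmonot}, and then bound its running time. The central observation is that monotonicity is equivalent to the condition $s \subseteq s' \implies h(s) \leq_L h(s')$ over all pairs of letters, and that the syntactic order $\leq_L$ (or more generally the order induced on any recognizing monoid $\M$) can itself be computed purely from the data $(\M,\op,h)$ and the accepting set $h(L) \subseteq \M$. So the work splits into two phases: first compute the relation $\leq_L$ on $\M$, then iterate over all pairs of letters $(s,s')$ with $s \subseteq s'$ and check the required inequalities.

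First I would make $h$ surjective by restricting $\M$ to the submonoid $h(\alph^*)$; since $h$ is given as a morphism on letters, $h(\alph^*)$ is generated by $\{h(a) : a \in \alph\}$ and can be saturated in time polynomial in $|\M|$ and $|\alph|$, which is within the stated budget. Next I would compute the syntactic order: by definition $m \leq_L m'$ holds exactly when, for every context $(p,q) \in \M \times \M$, $p \op m \op q \in h(L) \implies p \op m' \op q \in h(L)$. Naively testing this for a single pair $(m,m')$ costs $O(|\M|^2)$ (ranging over all contexts $(p,q)$, with each membership test and each product being a constant-time table lookup once $\op$ and $h(L)$ are stored). Doing this for all $|\M|^2$ candidate pairs $(m,m')$ would cost $O(|\M|^4)$, which exceeds the target. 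The key to staying within $O(|\alph|^2|\M|^2)$ is that we do not need the full order: by \Cref{caracmonot} we only need to test $h(s) \leq_L h(s')$ for the letter pairs $s \subseteq s'$, of which there are at most $|\alph|^2$. For each such pair we run the single-pair context check described above at cost $O(|\M|^2)$, giving total time $O(|\alph|^2 |\M|^2)$, matching the claimed bound. The preprocessing (surjectivization, storing the multiplication table and the indicator of $h(L)$) is absorbed into this bound.

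The main step to get right is the complexity accounting for the order test: I expect the subtle point to be justifying that the per-pair membership-preservation check is $O(|\M|^2)$ and that we genuinely avoid computing the entire order. Concretely, for a fixed pair $(m,m') = (h(s),h(s'))$ one iterates over the $|\M|^2$ contexts $(p,q)$, computes $p \op m \op q$ and $p \op m' \op q$ by two table lookups, consults membership in $h(L)$, and rejects if the implication fails; this is a single linear scan over the context space. Since \Cref{caracmonot} certifies that verifying these letter-level inequalities is both necessary and sufficient for monotonicity, the algorithm is correct, and I would simply state its correctness as an immediate consequence of that theorem rather than reprove it.

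\begin{proof}[Proof of \Cref{prop:algo}]
    By \Cref{caracmonot}, the language $L$ is monotone if and only if $h(s) \leq_L h(s')$ for every pair of letters $(s,s') \in \alph^2$ with $s \subseteq s'$, where $\leq_L$ is the order induced on $\M$ by the recognizing morphism $h$. First, restricting $\M$ to $h(\alph^*)$ ensures $h$ is surjective; this submonoid is generated by $\{h(a) : a \in \alph\}$ and computed in time within the stated budget. We then store the multiplication table $\op$ and the indicator of the accepting set $h(L) \subseteq \M$, so that each product and each membership test is a constant-time lookup.

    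There are at most $|\alph|^2$ pairs of letters $(s,s')$ with $s \subseteq s'$. For each such pair, we must decide whether $h(s) \leq_L h(s')$, which by definition of the syntactic order means that for every context $(p,q) \in \M \times \M$,
    $$
    p \op h(s) \op q \in h(L) \implies p \op h(s') \op q \in h(L).
    $$
    We check this by scanning all $|\M|^2$ contexts $(p,q)$, computing $p \op h(s) \op q$ and $p \op h(s') \op q$ by constant-time lookups and comparing membership in $h(L)$; this costs $O(|\M|^2)$ per letter pair. If the implication fails for some context, we reject; otherwise we accept. Summing over all letter pairs yields total running time $O(|\alph|^2|\M|^2)$. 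Correctness is immediate from \Cref{caracmonot}.
\end{proof}
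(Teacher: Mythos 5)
Your proof is correct and takes essentially the same approach as the paper: both invoke \Cref{caracmonot} to reduce monotonicity to checking $h(s)\leq_L h(s')$ for the at most $|\alph|^2$ letter pairs with $s\subseteq s'$, after restricting $\M$ to $h(\alph^*)$ so that $h$ is surjective, and both verify each such inequality by scanning the $|\M|^2$ contexts $(m,n)$ with constant-time table lookups. Your explicit remark that computing the full syntactic order would cost $O(|\M|^4)$ and is avoided is a helpful clarification, but the algorithm and the complexity accounting are the same as the paper's.
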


It was shown in \cite[Thm 2.5]{PFO} that deciding monotonicity is $\PSPACE$-complete if the language is given by an NFA, and in \textsc{P} if it is given by a DFA.

We give a more precise result for DFA, and give also the complexity for monoid input:

\begin{proposition}\label{complexity}
    Deciding whether a regular language is monotone is in $\L$ when the input is a monoid while it is $\NL-\comp$ when it is given by a DFA.
\end{proposition}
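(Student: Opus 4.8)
The plan is to prove the two complexity claims in \Cref{complexity} separately, leveraging the combinatorial characterisation of monotonicity established in \Cref{caracmonot}.

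For the monoid input case, recall that by \Cref{caracmonot} a regular language $L$ recognised by a monoid $\M$ via a surjective morphism $h$ is monotone if and only if for every pair of letters $(s,s')$ with $s \subseteq s'$ we have $h(s) \leq_L h(s')$. I would first show that this test can be carried out in \L. The key observation is that the syntactic order relation $h(s) \leq_L h(s')$ unfolds, by definition, into: for all $m,n \in \M$, if $m \op h(s) \op n \in h(L)$ then $m \op h(s') \op n \in h(L)$. A deterministic logspace machine can enumerate all triples consisting of a pair of letters $(s,s')$ (checking $s\subseteq s'$) and a pair of monoid elements $(m,n)$, each stored in $O(\log|\M|)$ space, and for each triple evaluate the two products $m \op h(s) \op n$ and $m \op h(s') \op n$ and test membership in the accepting set $h(L)$. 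Each product and membership test is a constant number of table lookups using only logarithmically many bits of working counters, so the whole procedure runs in deterministic logarithmic space. If every triple passes, $L$ is monotone; otherwise it is not.

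For the DFA input case, I would establish membership in \NL{} and \NL-hardness separately. For membership in \NL, I use the contrapositive: $L$ is \emph{not} monotone precisely when there exist letters $s \subseteq s'$ and words $u,v$ (i.e.\ paths in the DFA) such that $usv \in L$ but $us'v \notin L$, by \Cref{caracmonot} transported to the DFA setting. A nondeterministic logspace machine can guess such a witness on the fly: guess the pair $(s,s')$ with $s\subseteq s'$, then simulate the DFA reading a guessed prefix $u$ to reach some state $q$, take the two transitions on $s$ and on $s'$ from $q$ to reach states $q_s$ and $q_{s'}$, and then guess a common suffix $v$ letter by letter, simulating both the run from $q_s$ and the run from $q_{s'}$ in parallel, finally checking that the $s$-run ends in an accepting state while the $s'$-run ends in a rejecting one. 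Only the current pair of states plus a constant number of letters need to be stored, which is $O(\log n)$ space; hence non-monotonicity is in \NL, and since \NL{} is closed under complement (Immerman--Szelepcsényi), monotonicity itself is in \NL.

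For \NL-hardness of the DFA problem, I would reduce from a standard \NL-complete problem, most naturally the complement of directed $st$-reachability (which is \NL-complete by Immerman--Szelepcsényi), or from $st$-reachability itself if I instead reduce to non-monotonicity; the cleanest route is to encode a graph reachability instance as a DFA so that a distinguished monotonicity violation exists if and only if the reachability condition holds. The reduction must be computable in logspace and produce a DFA in which a path witnessing the relevant reachability translates into words $u,v$ and a letter pair $s\subseteq s'$ realising $usv\in L$, $us'v\notin L$. The main obstacle I anticipate is this last step: arranging the DFA gadget so that the unique source of non-monotonicity is controlled precisely by the reachability question, while keeping the alphabet and the inclusion relation $s\subseteq s'$ consistent with the subset semantics of \Cref{caracmonot}; in particular one must ensure the constructed automaton is genuinely deterministic and that no spurious monotonicity violations are introduced by the encoding. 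Once the gadget is fixed, verifying correctness of the reduction and its logspace computability is routine.
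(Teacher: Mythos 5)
Your treatment of the monoid case and of $\NL$-membership for DFAs is correct and essentially the paper's own argument: the monoid algorithm is exactly the exhaustive check of $m \op h(s) \op n \in h(L) \Rightarrow m \op h(s') \op n \in h(L)$ over all $(s,s',m,n)$ with lookup tables, and your nondeterministic guess of a single-position violation $usv \in L$, $us'v \notin L$ (closed under complement via Immerman--Szelepcs\'enyi) is just an on-the-fly version of the paper's product construction $\overline{\aut} \times \aut^{\uparrow}$ followed by an emptiness test. One small point worth making explicit in your $\NL$ argument: you should justify that a single-position violation always exists when $L$ is not monotone (pass from $u$ to $v \geqa u$ by raising one letter at a time; membership must flip at some step), which is exactly what \Cref{caracmonot} encodes.

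The genuine gap is the $\NL$-hardness direction, which you describe only as a plan and where you explicitly flag the gadget construction as an unresolved obstacle. No gadget engineering of the kind you anticipate is needed. The paper reduces from DFA \emph{emptiness} restricted to automata not accepting $\varepsilon$ (still $\NL$-$\comp$, being the complement of reachability): given such a DFA $\aut$ over $A$, form $\aut'$ over $A \cup \{\top\}$ by declaring $\top$ strictly above every letter of $A$ and giving $\aut'$ \emph{no} $\top$-labelled transitions. If $\aut$ accepts some word $u$ of length $n \geq 1$, then $\aut'$ accepts $u$ but rejects $\top^n \geqa u$, so $L(\aut')$ is not monotone; if $L(\aut) = \emptyset$ then $L(\aut') = \emptyset$ is monotone. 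The whole point is that adding a fresh maximal letter with no transitions makes \emph{every} nonempty language non-monotone, so there is no need to control a ``unique source of non-monotonicity'' tied to a particular path. Your proposal, as written, does not contain this (or any) working reduction, so the hardness half of the statement remains unproved.
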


See Appendix \ref{ap:proofmon} for the proof.

\section{Semantic and syntactic monotonicity} \label{sec:semVSsynt}

The paper \cite[Definition 4.2]{PFO} exhibits a monotone language definable in $\FO$ but not in $\FOp$. The question then arises as to how simple such a counter-example can be. For instance, can it be taken in specific fragments of $\FO$, such as $\FOtw$. This section presents a few lemmas that might shed some light on the subject, followed by some conjectures.

From now on we will write $A$ the alphabet $\alph$.

%%%%%%%% en travaux

\subsection{Refinement of the counter-example in the general case} \label{sec:refinement}

In \cite{PFO}, the counter-example language that is monotone and $\FO$-definable but not $\FOp$-definable uses three predicates $a$, $b$ and $c$ and is as follows:
$$
K = ((abc)^*)^{\uparrow} \cup A^* \top A^*.
$$

It uses the following words to find a strategy for Duplicator in $\EF_k^+$:
$$
u_0 = (abc)^n
\text{ and }
u_1 = \left(\binom{a}{b}\binom{b}{c}\binom{c}{a}\right)^n \binom{a}{b} \binom{b}{c},
$$

where $n$ is greater than $2^k$, and $\binom{s}{t}$ is just a compact notation for the letter $\{s,t\}$ for any predicates $s$ and $t$.

This in turns allows to show the failure on Lyndon's preservation theorem on finite structures \cite{PFO}.
Our goal in this section is to refine this counter-example to more constrained settings. We hope that by trying to explore the limits of this behaviour, we achieve a better understanding of the discrepancy between monotone and positive.

In Section \ref{sec:between}, we give a smaller fragment of $\FO$ where the counter-example can still be encoded.
In Section \ref{sec:onepred}, we show that the counter-example can still be expressed with a single unary predicate. This means that it could occur for instance in $\LTLp$ where the specification only talks about one sensor being activated or not.

\subsubsection{Using the between predicate}\label{sec:between}

First, let us define the ``between'' binary predicate introduced in \cite{between}.

\begin{definition}\cite{between}
    For any unary predicate $a$ (not only predicates from $\Sigma$ but also Boolean combination of them), $a$ also designates a binary predicate, called between predicate, such that for any word $u$ and any valuation $\val$, $(u,\val) \models a(x,y)$ if and only if there exists an index $i$ between $\val(x)$ and $\val(y)$ excluded such that $(u_i,\val) \models a$, where $u_i$ is the $i$-th letter of $u$.

    We denote $\be$ the set of between predicates and $\be^+$ the set of between predicates associated to the set of positive unary predicates.
\end{definition}

Is is shown in \cite{between} that $\FO^2[\bin_0\cup\be]$ is strictly less expressive than $\FO$.

\begin{proposition} \label{FO2be}
    There exists a monotone language definable in $\FOtw[\bin_0 \cup \be]$ which is not definable in $\FOtwp[\bin_0 \cup \be^+]$.
\end{proposition}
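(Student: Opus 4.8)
The plan is to split the statement into its positive (negative-result) side and its definability side, and to obtain the former almost for free by reducing to \cite{PFO}.

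First I would reduce non-definability in $\FOtwp[\bin_0 \cup \be^+]$ to the known fact that $K$ is not $\FOp$-definable. The key observation is that every positive between predicate unfolds into a negation-free first-order formula: if $a$ is a positive Boolean combination of unary predicates, then $a(x,y)$ is equivalent to $\exists z\,\big(((x < z \wedge z < y) \vee (y < z \wedge z < x)) \wedge a(z)\big)$, which is negation-free and uses the order positively. Since the predicates of $\bin_0$ are themselves atoms of $\FOp$, every $\FOtwp[\bin_0 \cup \be^+]$ formula translates into an equivalent $\FOp$ formula (the two-variable restriction is simply dropped, a third variable $z$ being introduced). Hence any language definable in $\FOtwp[\bin_0 \cup \be^+]$ is already definable in $\FOp$, and it suffices to exhibit a monotone language that is definable in $\FOtw[\bin_0 \cup \be]$ but not in $\FOp$.

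For this I would take the very language $K = ((abc)^*)^{\uparrow} \cup A^* \top A^*$ of \cite{PFO}. It is monotone, being the union of the upward-closed language $((abc)^*)^{\uparrow}$ with the monotone language $A^* \top A^*$, and it is not $\FOp$-definable by \cite{PFO}. With the reduction above, the whole content of the proposition then collapses to a single claim: $K$ is definable in $\FOtw[\bin_0 \cup \be]$. To prove this I would write $K$ as a disjunction $K_\top \vee K_{\mathrm{pat}}$. The first disjunct, capturing $A^* \top A^*$, is the one-variable formula $\exists x\,(a(x) \wedge b(x) \wedge c(x))$. On words containing a full letter this disjunct already accepts, so $K_{\mathrm{pat}}$ only has to be correct on full-free words, where $K$ coincides with $((abc)^*)^{\uparrow}$; there each position carries at most two letters and must contain the letter prescribed by its index modulo $3$ (the first position plays role $a$ and, the length being divisible by $3$, the last plays role $c$). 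The successor predicate of $\bin_0$ lets two variables check the local transitions $a \to b \to c \to a$, while the between predicate is used to enforce global consistency of the phase.

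I expect this last point to be the main obstacle, and it is genuinely where the between predicate earns its place. A purely local check fails — even one inspecting windows of three consecutive positions — because an ambiguous two-letter position such as $\{a,c\}$ is compatible with two distinct roles, so there exist full-free words all of whose short factors are locally consistent yet which admit no globally consistent role assignment. Ruling out such "phase slips" is what forces one to go beyond $\FOtw[<]$ (cf. \Cref{FO2clot}), and the between predicate, applied to suitable Boolean combinations of the unary predicates, supplies exactly the extra expressive power needed to recover the modulo-$3$ phase with only two variables. The technical heart is therefore the explicit $\FOtw[\bin_0 \cup \be]$ formula for $K_{\mathrm{pat}}$ together with the verification that it defines $((abc)^*)^{\uparrow}$ exactly on full-free words and remains subsumed by $K_\top$ elsewhere. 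Once that formula is in hand, combining it with the monotonicity of $K$ and the reduction of the first paragraph shows that $K$ is monotone, definable in $\FOtw[\bin_0 \cup \be]$, and not definable in $\FOtwp[\bin_0 \cup \be^+]$, which is the proposition.
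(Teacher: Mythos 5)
Your reduction of the negative side is sound and is essentially what the paper relies on implicitly: a positive between predicate $a(x,y)$ with $a\in\be^+$ unfolds into a negation-free $\FO$ formula with one extra variable, the predicates of $\bin_0$ are atoms of $\FOp$, so every $\FOtwp[\bin_0\cup\be^+]$-definable language is $\FOp$-definable and non-definability follows from the known Duplicator strategy on $(u_0,u_1)$. The gap is entirely on the definability side. You take the separating language to be $K$ itself and assert that $K\in\FOtw[\bin_0\cup\be]$, but you never produce the formula --- you explicitly defer ``the technical heart'' --- and that claim is the whole nontrivial content of the proposition. Moreover, the paper does \emph{not} use $K$: it uses the enlarged language
$$
K \cup A^*\left(
        \binom{a}{b}^2 \cup \binom{b}{c}^2 \cup \binom{c}{a}^2 \cup
        \binom{a}{b}\binom{c}{a} \cup
        \binom{b}{c}\binom{a}{b} \cup
        \binom{c}{a}\binom{b}{c}
    \right)A^* ,
$$
which additionally accepts every word containing a pair of consecutive double letters whose joint reading is forced (an ``anchor of the second kind''). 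This modification is precisely what makes two variables suffice: afterwards the only anchors that remain to be handled are singleton letters, which are unary properties of a single position, so two variables can pick out two consecutive singleton anchors, (negated) between predicates can say that no singleton occurs strictly between them, and successor plus between predicates check compatibility. For $K$ itself a forced pair is an anchor spanning \emph{two} positions; ``no anchor strictly between $x$ and $y$'' then quantifies over pairs of adjacent positions, which is not a unary property and hence not expressible by a between predicate, and comparing such an anchor with the next one is not obviously doable with two variables. So your plan must either prove $K\in\FOtw[\bin_0\cup\be]$ --- which is unsubstantiated and which the paper visibly avoids --- or enlarge the language as the paper does.

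Note that the enlargement costs nothing: the added part keeps the language monotone (strictly increasing a double letter of a forced pair produces $\top$, hence membership via $A^*\top A^*$), and it still separates $u_0$ from $u_1$ because none of the six forced pairs occurs as a factor of $u_1$, whose consecutive pairs are all ambiguous. If you repair your proof by switching to the paper's language, your first paragraph carries over verbatim and only the explicit two-variable formula (as in the paper's appendix) remains to be written.
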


\begin{proof}
    We can use the same words $u_0$ and $u_1$ defined above with the following language:
    $$
    K \cup A^*\left(
        \binom{a}{b}^2 \cup \binom{b}{c}^2 \cup \binom{c}{a}^2 \cup
        \binom{a}{b}\binom{c}{a} \cup
        \binom{b}{c}\binom{a}{b} \cup
        \binom{c}{a}\binom{b}{c}
    \right)A^* .
    $$

    Indeed, in \cite{PFO}, it is explained that we need to look for some ``anchor position'' to know whether a word belongs to $K$. Such positions resolve the possible ambiguity introduced by double letters of the form $\binom{a}{b}$, that could play two different roles for witnessing membership in $((abc)^*)^{\uparrow}$.
    Indeed, if $\binom{a}{b}$ appears in a word, we cannot tell whether it stands for an $a$ or a $b$.
    In contrast, anchor letters have only one possible interpretation.
    They may be singletons ($\{a\}$, $\{b\}$, $\{c\}$) or consecutive double letters such as $\binom{a}{b}\binom{c}{a}$ which can only be interpreted as $bc$.
    Here, we accept any word containing an anchor of the second kind.
    This means that in remaining words we will only be interested in singleton anchors.
    Thus, we need two variables only to locate consecutive anchors and between predicates to check if the letters between the anchors are double letters.
    See Appendix \ref{ap:formula} for a more detailed description of a formula.

\end{proof}

%%%%% 

\subsubsection{Only one unary predicate}\label{sec:onepred}

Now, let us show another refinement.
We can lift $K$ to a counter-example where the set of predicates $\Sigma$ is reduced to a singleton.

\begin{proposition} \label{prop:onepred}
    As soon as there is at least one unary predicate, there exists a monotone language definable in $\FO$ but not in $\FOp$.
\end{proposition}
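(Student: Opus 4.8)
The plan is to \emph{encode} the three predicates of the counter-example $K$ into a single predicate by a fixed-length block code, and to transfer the three relevant properties of $K$ (monotone, $\FO$-definable, not $\FOp$-definable) across this encoding. Write $\Sigma=\{p\}$, so that $A=\part(\Sigma)=\{\emptyset,\{p\}\}$, a two-letter alphabet that I read as bits $0<1$. I would fix the monotone block code $h\colon\part(\{a,b,c\})\to A^3$ sending a letter $s$ to the triple of presence bits $([a\in s],[b\in s],[c\in s])$, extended to a length-tripling map $h\colon\part(\{a,b,c\})^*\to A^*$ by concatenation. The point of this code is that it both \emph{preserves} and \emph{reflects} the order on equal-length words: for $|u|=|v|$ one has $u\leqa v$ iff $h(u)\leqa h(v)$, since the $(3k+j)$-th bit of $h(u)$ records membership of the $j$-th predicate in the $k$-th letter of $u$.

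Next I would define the one-predicate language as $L:=h(K)\cup C$, where $C$ is a monotone, $\FO$-definable ``catch-all'' over $A$ playing the role that $A^*\top A^*$ plays for $K$. Monotonicity of $L$ is then immediate: $h(K)$ is upward closed inside each length (because $K$ is monotone and $h$ reflects $\leqa$), $C$ is monotone, and $\leqa$ only ever compares words of equal length. For $\FO$-definability I would relativise the $\FO$ formula defining $K$ through $h$: each original position becomes a block of three consecutive positions reached with $\succ$, the atoms $a(x),b(x),c(x)$ become $p$ at the corresponding sub-positions, and quantifiers range over block-starts. Exactly as in the $\FO$-definability argument for $K$ itself, ill-phased or misaligned positions make the relativised formula reject, so that no genuine counting modulo $3$ is required; the catch-all $C$ is there precisely to absorb the short and ill-formed words and, crucially, to give $L$ words of every large length, which removes the purely length-modular constraint that would otherwise make $h(K)$ alone fall outside $\FO$.

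Finally, for non-$\FOp$-definability I would reuse the separating pair, taking $u_0=(abc)^n$ and the word $u_1$ from the analysis of $K$, with $n>2^k$. Then $h(u_0)\in h(K)\subseteq L$, while $h(u_1)\notin L$: order-reflection of $h$ gives $h(u_1)\notin h(K)$ from $u_1\notin K$, and $C$ is chosen so as not to contain $h(u_1)$. It then remains to lift Duplicator's winning strategy in $\EF_k^+(u_0,u_1)$ to one in $\EF_k^+(h(u_0),h(u_1))$: a pebble on position $i$ of an encoded word corresponds to block $\lfloor i/3\rfloor$ together with the sub-index $i \bmod 3$, and Duplicator answers in the matching block with the matching sub-index. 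Because the three sub-positions of a block have a fixed relative order and their three bits are determined by the single original label, and because $h$ preserves and reflects both $<$ and label-inclusion, every order relation and every positive label-inclusion demanded by the game on the encoded words follows from the corresponding relation in the original play. Hence no $\FOp$ formula of quantifier rank $k$ separates $h(u_0)$ from $h(u_1)$, and letting $k$ grow shows $L\notin\FOp$.

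I expect the main obstacle to be the design of $C$ (and of the precise block code) reconciling three conflicting demands: $L$ must stay star-free and $\FO$-definable (no residual modular-length artefact, such as a word $0^n\,111$ that detects $n \bmod 3$), it must be monotone, and it must exclude $h(u_1)$. The difficulty is intrinsic to having a \emph{single} predicate under upward closure: block separators are unreliable, since any $0$ can be raised to $1$, and ``top-like'' patterns can be created across block boundaries by non-top letters, so $C$ cannot simply be taken to be ``contains a fixed run of $1$s''. A secondary point to verify carefully is that the positive-game transfer correctly handles Spoiler moves landing strictly inside a block.
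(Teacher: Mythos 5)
Your overall strategy --- block-encode the three predicates into bits, transfer monotonicity via the fact that the code preserves and reflects $\leqa$ on equal-length words, and lift Duplicator's strategy blockwise in the positive EF game --- is the same as the paper's. But there is a genuine gap at exactly the point you flag as ``the main obstacle'': you never construct the catch-all $C$ nor the precise code, and without them the $\FO$-definability step does not go through. Your plan to ``relativise the $\FO$ formula through $h$'' with ``quantifiers ranging over block-starts'' presupposes that being a block-start is $\FO$-definable; over a plain binary word it is not (it amounts to computing a position modulo $3$), and merely adding words of every length to $L$ does not remove this obstruction --- one still has to recover the block decomposition inside the words that are not absorbed by $C$. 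Moreover, as you yourself observe, with a single predicate under upward closure any $0$ can be raised to a $1$, so no naive marker survives monotonization; this is precisely why the problem is nontrivial, and acknowledging the difficulty is not the same as resolving it.

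The paper's proof supplies the missing ingredient: it does not use a bare $3$-bit presence code but interleaves an explicit separator codeword $[\#]=100001$ between the letter codewords $[a]=001$, $[b]=010$, $[c]=100$, and adds to the language the clauses $A^*1(A^4\setminus 0^4)1A^*$ and $A^*1^5A^*$, which accept every word in which the occurrences of $1$ are dense enough to make the separator pattern ambiguous. In any word not already accepted by these clauses, every occurrence of $100001$ genuinely is a separator, so the block structure (and hence the anchor-based $\FO$ formula for $K$) becomes $\FO$-definable, while the added clauses are themselves monotone and avoid the encoded hard word $[u_1]$. Your EF-game transfer and your monotonicity argument would then work essentially as you describe, but as written the proposal leaves the central construction undone.
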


\begin{proof}

Suppose $\Sigma$ reduced to a singleton.
Then, $A$ is reduced to two letters which we note $0$ and $1$ with $1$ greater than $0$.
We will encode each predicate from $\{a,b,c\}$ and a new letter $\#$ (the separator) into $A^*$ as follows:
$$
\left\{
\begin{array}{l}
    [a] = 001 \\ \relax
    [b] = 010 \\ \relax
    [c] = 100 \\ \relax
    [\#] = 100001
\end{array}
\right. .
$$

Thus, the letter $\binom ab$ will be encoded by $[ab]=011$, etc.
We will encode the language $K$ as follows:
$$
[K] = (([a][\#][b][\#][c][\#])^*)^{\uparrow} \cup A^* 1 (A^4 \backslash 0^4) 1 A^* \cup A^*1^5A^*.
$$

First, we can notice that $[K]$ is monotone.

Let us show how the separator $[\#]$ is used.
Let $w$ be a word over $A^*$.
If $w$ contains a factor of the form $1u1$ where $u$ is a word of $4$ letters containing the letter $1$, then $w$ immediately belongs to $[K]$.
This is easy to check with an $\FO$-formula so we can suppose that $w$ does not contain such a factor.
Similarly, we can suppose that $1^5$ (corresponding to $\top$ in the original $K$) is not a factor of $w$.
Then, it is easy to locate a separator since $100001$ will always be a separator factor.
Therefore, we can locate factor coding letters in $w$.
Then we can do the same thing as \cite{PFO} to find an $\FO$-formula: we have to fix some anchors (factors coding letters whose roles are not ambiguous as explained in the proof of \Cref{FO2be}) and check whether they are compatible.
For example, suppose $w$ contains a factor of the form $[a][\#]([ab][\#][bc][\#][ca][\#])^n[bc]$.
Then $[a]$ is an anchor.
The last factor $[ca][\#][bc]$ is also an anchor since it can only be interpreted as $([a][\#][b])^\uparrow$.
Since there are no anchors in between $[a]$ and $[bc]$ we just have to verify their compatibility.
Here it is the case:
in between the anchors, each $[ab]$ can be interpreted as $[b]^\uparrow$, $[bc]$ as $[c]^\uparrow$ and $[ca]$ as $[a]^\uparrow$.
If we were to replace $[a]$ with $[c]$, $[c]$ would still be an anchor but would not be compatible with $[bc]$.
This achieves the description of an $\FO$-formula for $[K]$.

Furthermore, it is not $\FOp$-definable.
Indeed, let $k\in\N$ be an arbitrary number of rounds for an $\EF^+$-game.
We can choose $n > 2^k$ such that Duplicator has a winning strategy for $u_0$ and $u_1$ defined as follows:
$$
[u_0] = ([a][\#][b][\#][c][\#])^n
\text{ and }
[u_1] = ([ab][\#][bc][\#][ca][\#])^n[ab][\#],
$$

where $[ab] = 011$, $[bc] = 110$ and $[ca] = 101$.

We can adapt the strategy for $u_0$ and $u_1$ (from \cite[Lemma 4.4]{PFO}) to $[u_0]$ and $[u_1]$.
For example, if Spoiler plays the $i$-th letter of a factor $[bc]$, then it is similar to playing the letter $\binom{b}{c}$ in $u_1$.
Thus, if Duplicator answers by playing the $j$-th $b$ or $c$ in $u_0$, then he should answer by playing the $i$-th letter of the $j$-th $[b]$ or $[c]$ respectively, for any natural integers $i$ and $j$.
In the same way, if Spoiler plays in a separator character, then Duplicator should answer by playing the same letter of the corresponding separator character in the other word according to the strategy.

\end{proof}

\subsection{Stability through monotone closure} \label{sec:stability}

It has been shown by Thérien and Wilke \cite{OneQuantifierAlternation} that languages $\FOtw[\bin_<]$-definable are exactly those who are both $\Sigma_2$-definable and $\Pi_2$-definable where $\Sigma_2$ is the set of $\FO$-formulas of the form $\exists x_1,....,x_n \forall y_1,...,y_m \varphi(x_1,...,x_n,y_1,...y_m)$ where $\varphi$ does not have any quantifier and $\Pi_2$-formulas are negations of $\Sigma_2$-formulas.
Hence, $\Sigma_2 \cup \Pi_2$ is the set of $\FO$-formulas in prenex normal form with at most one quantifier alternation.
Moreover, Pin and Weil \cite{PolynomialClosure} showed that $\Sigma_2$ describes the unions of languages of the form $A_0^*.s_0.A_1^*.s_1.....s_t.A_{t+1}^*$, where $t$ is a natural integer, $s_i$ are letters from $A$ and $A_i$ are subalphabets of $A$.

Even though we do not know yet whether $\FOtwp$ captures the set of monotone $\FOtw$-definable languages, we can state the following theorem:

\begin{theorem} \label{thm:sigpimon}
    The set $\Sigma_2^+ \cap \Pi_2^+$ of languages definable by both positive $\Sigma_2$-formulas (written $\Sigma_2^+$) and positive $\Pi_2$-formulas (written $\Pi_2^+$) is equal to the set of monotone $\FOtw$-definable languages.
\end{theorem}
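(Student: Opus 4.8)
The plan is to prove the two inclusions separately. The inclusion $\Sigma_2^+ \cap \Pi_2^+ \subseteq \{\text{monotone } \FOtw\text{-definable}\}$ is immediate: any language defined by a positive formula is monotone (the proposition that $\FOp$-formulas are monotone \cite{PFO}), and since $\Sigma_2^+ \subseteq \Sigma_2$ and $\Pi_2^+ \subseteq \Pi_2$, a language $L \in \Sigma_2^+ \cap \Pi_2^+$ lies in $\Sigma_2 \cap \Pi_2$, which equals the $\FOtw[\bin_<]$-definable languages by Thérien and Wilke \cite{OneQuantifierAlternation}. For the reverse inclusion, fix a monotone $\FOtw$-definable $L$; Thérien--Wilke gives $L \in \Sigma_2$ and $L \in \Pi_2$, so it suffices to establish the two implications ``monotone $+\ \Sigma_2 \Rightarrow \Sigma_2^+$'' and ``monotone $+\ \Pi_2 \Rightarrow \Pi_2^+$''.

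For the first implication I would argue semantically through the Pin--Weil description \cite{PolynomialClosure}. Write $L = \bigcup_i P_i$ with $P_i = A_0^* s_0 A_1^* \cdots s_t A_{t+1}^*$, the $A_k$ subalphabets of $A$ and the $s_k$ letters. Since $L$ is monotone, $L = L^{\uparrow} = \bigcup_i P_i^{\uparrow}$, where $(\cdot)^{\uparrow}$ is upward closure for $\leqa$. As $\leqa$ is length-preserving, a direct computation shows
\[
P_i^{\uparrow} = B_0^* T_0 B_1^* T_1 \cdots T_t B_{t+1}^*, \qquad B_k = A_k^{\uparrow},\ T_k = s_k^{\uparrow},
\]
where $A_k^{\uparrow} = \{c : \exists a \in A_k,\ a \subseteq c\}$ and $s_k^{\uparrow} = \{c : s_k \subseteq c\}$ are upward-closed sets of letters. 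Each such marked product is $\Sigma_2^+$-definable: the standard $\Sigma_2$ sentence guesses marker positions $p_0 < \cdots < p_t$, asserts $T_k(p_k)$, and universally requires every position in block $k$ to carry a letter of $B_k$. The marker condition $w[p_k] \in s_k^{\uparrow}$ is $\bigwedge_{a \in s_k} a(p_k)$ and the block condition $w[y] \in B_k$ is $\bigvee_{a \in A_k}\bigwedge_{b \in a} b(y)$, both positive precisely because $T_k, B_k$ are upward closed; the block-boundary implications involve only negations of order atoms, which are themselves atoms of $\bin_<$ (e.g. $\neg(p_{k-1} < y \land y < p_k) \equiv y \leq p_{k-1} \lor p_k \leq y$) and hence positive. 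Finally $\Sigma_2^+$ is closed under finite union, since the usual merge $(\exists \bar x \forall \bar y\,\theta) \lor (\exists \bar x' \forall \bar y'\,\theta') \equiv \exists \bar x \bar x' \forall \bar y \bar y'\,(\theta \lor \theta')$ keeps the matrix positive; thus $L = \bigcup_i P_i^{\uparrow} \in \Sigma_2^+$.

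The second implication I would obtain by dualising rather than redoing the combinatorics. A language is in $\Pi_2^+$ iff its complement is defined by a $\Sigma_2$ sentence $\exists \bar x \forall \bar y\,\theta$ in which every unary atom occurs negated (an \emph{antitone} $\Sigma_2$ sentence), because negating a positive $\forall^*\exists^*$ matrix turns unary atoms negative while keeping order atoms in $\bin_<$. Now $L$ monotone means $\bar L$ is downward closed, and $L \in \Pi_2$ means $\bar L \in \Sigma_2$. Repeating the previous computation with downward closure gives $\bar L = \bar L^{\downarrow} = \bigcup_i P_i^{\downarrow}$, and each $P_i^{\downarrow}$ is a marked product with downward-closed components; membership in a downward-closed set of letters is expressed using unary atoms only under negation (e.g. $w[y] \subseteq s \Leftrightarrow \bigwedge_{a \notin s} \neg a(y)$), so $P_i^{\downarrow}$, and hence $\bar L$, is antitone-$\Sigma_2$-definable (antitone $\Sigma_2$ being closed under union by the same merge). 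Complementing back yields $L \in \Pi_2^+$, completing the reverse inclusion.

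The main obstacle, and the step deserving the most care, is the structural lemma computing $P_i^{\uparrow}$ (and dually $P_i^{\downarrow}$) and verifying that the resulting marked-product conditions are expressible without negating unary predicates. Everything hinges on two facts working in tandem: upward closure sends subalphabets and single-letter markers to upward-closed sets of letters, and membership in such a set is a positive Boolean combination of the $a(\cdot)$, while negated order atoms stay inside $\bin_<$. One must also check the minor points that $\leqa$ preserves length (so the product structure is genuinely preserved rather than blurred), that the finite-union merge does not require existential witnesses that fail to exist on $\varepsilon$ (handled by treating the empty word separately), and that the positive/antitone duality between $\Sigma_2$ and $\Pi_2$ under complementation is exactly matched by the upward/downward-closure duality.
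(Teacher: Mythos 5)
Your proposal is correct and follows essentially the same route as the paper: the easy inclusion via positivity-implies-monotonicity plus Th\'erien--Wilke, the $\Sigma_2^+$ half via the Pin--Weil marked-product decomposition and an explicit positive sentence for the upward closure, and the $\Pi_2^+$ half by dualising through complements and downward closures (your ``antitone $\Sigma_2$'' is exactly the paper's $\Sigma_2^-$). The only difference is that you spell out a few details the paper leaves implicit (the computation of $P_i^{\uparrow}$, closure of $\Sigma_2^+$ under union, the empty-word caveat), which is fine.
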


In order to prove \Cref{thm:sigpimon}, we shall introduce a useful definition:

\begin{definition}
    For any language $L$, we write $L^{\dualmon} = ((L^c)^\downarrow)^c$ the dual closure of $L$, where $L^c$ stands for the complement of $L$ and $L^\downarrow$ is the downwards monotone closure of $L$.
\end{definition}

\begin{remark}\label{rmk:monclos}
    It is straightforward to show that $L^{\dualmon}$ is the greatest monotone language included in $L$ for any language $L$.
    In particular, a monotone language is both equal to its monotone closure and its dual monotone closure.
\end{remark}

Now, let us show the following lemma:

\begin{lemma}\label{lem:sig2+}
    The set $\Sigma_2^+$ captures the set of monotone $\Sigma_2$-definable languages.
\end{lemma}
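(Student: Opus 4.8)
The goal is to show that $\Sigma_2^+$ captures exactly the monotone $\Sigma_2$-definable languages. One inclusion is immediate: every $\Sigma_2^+$ formula is in particular an $\FOp$ formula, hence defines a monotone language by the proposition from \cite{PFO} on monotonicity of $\FOp$, and it is of course $\Sigma_2$-definable. The content of the lemma is therefore the converse: given a language $L$ that is both monotone and $\Sigma_2$-definable, I must produce a positive $\Sigma_2$ formula defining it. The plan is to start from the concrete description of $\Sigma_2$ languages due to Pin and Weil recalled just above the statement, namely that a $\Sigma_2$-definable language is a finite union of languages of the form $A_0^* s_0 A_1^* s_1 \cdots s_t A_{t+1}^*$ with letters $s_i \in A$ and subalphabets $A_i \subseteq A$, and to ``monotonize'' this description.

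First I would observe that each basic building block $A_0^* s_0 A_1^* \cdots s_t A_{t+1}^*$ can be written as an existential statement: $\exists x_0 < x_1 < \cdots < x_t$ placing the pinned letters $s_i$ at positions $x_i$, with universally quantified constraints saying that all letters strictly between $x_{i}$ and $x_{i+1}$ lie in the subalphabet $A_{i+1}$ (and symmetrically before $x_0$ and after $x_t$). The negations in such a formula occur only in two places: in the pinning conditions $s_i(x_i)$ if we insist the letter \emph{equals} $s_i$ rather than merely \emph{contains} $s_i$, and in the subalphabet membership constraints $\bigvee_{a \in A_{i+1}} a(z)$, whose monotone reading is harmless but whose intended meaning (``$z$ contains \emph{only} letters from $A_{i+1}$'') is a negative condition. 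The key step is to replace the language $L$ by its dual monotone closure $L^{\dualmon}$, which by \Cref{rmk:monclos} equals $L$ since $L$ is monotone; working with $L^{\dualmon}$ lets me reason about the largest monotone language inside each union member and argue that the positive relaxation of the above formula still defines exactly $L$.

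Concretely, I would take the formula for $L$, erase every negation (replacing the ``letter equals $s_i$'' conditions by the positive ``letter contains $s_i$'', i.e. $s_i(x_i)$ in the inclusion sense of the semantics, and dropping the negative part of the subalphabet constraints), obtaining a positive $\Sigma_2$ formula $\varphi^+$. By monotonicity of $\FOp$, $\varphi^+$ defines a monotone language $L^+ \supseteq L$. The heart of the argument is the reverse inclusion $L^+ \subseteq L$: I must show that relaxing the constraints does not add any word. Here I would use that $L$ is monotone, so that if a word $w$ satisfies $\varphi^+$, the witnessing positions and subalphabet memberships can be ``rounded down'' to an actual word $w' \leqa w$ satisfying the original formula for $L$; then $w' \in L$ and monotonicity of $L$ gives $w \in L$. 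In other words, the positive formula accepts $w$ precisely when some word below $w$ is genuinely in $L$, which by monotonicity is equivalent to $w \in L$.

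\textbf{Main obstacle.} The delicate point is not the single building block but the finite \emph{union}: $\Sigma_2^+$ must itself be a single positive $\Sigma_2$ formula, and a disjunction of $\Sigma_2^+$ formulas is again $\Sigma_2^+$ (existential quantifiers distribute over disjunction and can be merged into a prenex block), so this causes no difficulty in principle, but I must check the rounding-down argument is compatible with choosing \emph{which} union member witnesses membership. The real subtlety is ensuring that the positive relaxation of the subalphabet constraints is sound: relaxing ``every letter between the anchors lies in $A_{i+1}$'' to a positive condition could a priori enlarge the language beyond $L^{\dualmon}$, and one must verify, using exactly the fact that $L$ is closed upward and that $L = L^{\dualmon}$ is the greatest monotone language inside the $\Sigma_2$ description, that no spurious words are introduced. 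I expect this soundness verification — matching the positive relaxation with the dual monotone closure — to be the technical core of the proof.
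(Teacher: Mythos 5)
Your proposal is correct and follows essentially the same route as the paper: reduce via Pin--Weil to a single monomial $A_0^*s_0A_1^*\cdots s_tA_{t+1}^*$, write the natural $\exists\forall$ formula with $s_i(x_i)$ read as containment and the subalphabet constraints read positively, and observe that this positive relaxation defines exactly the upward monotone closure of the monomial (your ``rounding down'' argument), so that for monotone $L$ the disjunction over monomials gives back $L$ itself. The only slip is framing the key step around the dual closure $L^{\dualmon}$ (the largest monotone language \emph{inside} $L$, which the paper reserves for the $\Pi_2$ case), whereas the relevant object here is the upward closure $L^{\uparrow}$; since $L$ is monotone the two coincide with $L$, so no actual error results.
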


\begin{proof}
    First, it is clear that $\Sigma_2^+$ describes monotone $\Sigma_2$-definable languages.

    Next, it is enough to show that the monotone closure of a $\Sigma_2$-definable language is $\Sigma_2^+$-definable.

    So let us consider a $\Sigma_2$-definable language $L$.
    %Since the monotone closure of a union of languages is equal to the union of the monotone closure of the languages,
    Since a disjunction of $\Sigma_2^+$ formulas is equivalent to a $\Sigma_2^+$ formula, we can suppose thanks to \cite{PolynomialClosure} that $L$ is of the form $A_0^*.s_0.A_1^*.s_1.....s_t.A_{t+1}^*$ as explained above.

    Therefore, $L^\uparrow$ is described by the following $\Sigma_2^+$-formula:
    $$
        \exists x_0,...,x_t, \forall y, x_0 < ... < x_t \land \bigwedge_{i=0}^t s_i(x_i)
        %\land (y < x_0 \Rightarrow A_0(y))
        \land \bigwedge_{i=0}^{t+1} ( x_{i-1}<y<x_i \Rightarrow A_i(y)),
        %\land (y > x_t \Rightarrow A_{t+1}(y)),
    $$

    where $B(x)$ means $\bigvee_{b \in B} b(x)$ for any subalphabet $B$, $x_{-1} < y < x_0$ means $y<x_0$ and $x_t < y < x_{t+1}$ means $x_t < y$.

    %Since the disjunction of $\Sigma_2^+$ formulas is equivalent to a $\Sigma_2^+$ formula, we can conclude that any $\Sigma_2$-definable monotone language is $\Sigma_2^+$-definable.
    
\end{proof}

This immediately gives the following lemma which uses the same sketch proof:

\begin{lemma}\label{lem:sig2-}
    The set $\Sigma_2^-$ ($\Sigma_2$-formulas with negations on all predicates) captures the set of downwards closed $\Sigma_2$-definable languages.
\end{lemma}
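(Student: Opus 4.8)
The plan is to exploit the duality between $\Sigma_2^+$ and $\Sigma_2^-$ together with the monotone closure operation, mirroring the proof of \Cref{lem:sig2+} under the order-reversing correspondence. The key observation is that complementation and order-reversal on unary predicates turn upward-closed phenomena into downward-closed ones: if $L$ is $\Sigma_2$-definable, then its complement relates to $\Pi_2$, and negating all unary predicates swaps monotone (upward closed) with downward closed. So I would set up a precise dictionary between the two settings and then simply transport the argument of \Cref{lem:sig2+}.

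Concretely, first I would make explicit the order-reversing involution on the alphabet $A = \part(\Sigma)$ sending each letter $s$ to its complement $s^c$ (within $\Sigma$), which reverses the inclusion order $\leqa$ on letters and hence on words. Writing $\overline{L}$ for the image of $L$ under this letter-complementation map applied positionwise, one checks that $L$ is downward closed if and only if $\overline{L}$ is upward closed (monotone), and that $L$ is $\Sigma_2$-definable if and only if $\overline{L}$ is $\Sigma_2$-definable, since the transformation only renames the unary predicate valuations and does not touch the quantifier structure or the order predicates in $\bin_<$. Under this same correspondence, a defining formula in which every unary predicate appears positively becomes one in which every unary predicate is negated: that is, $\overline{L}$ is $\Sigma_2^+$-definable exactly when $L$ is $\Sigma_2^-$-definable.

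With this dictionary in hand, the statement follows directly. Let $L$ be a downward closed $\Sigma_2$-definable language. Then $\overline{L}$ is a monotone $\Sigma_2$-definable language, so by \Cref{lem:sig2+} it is $\Sigma_2^+$-definable; transporting back through the involution, $L$ is $\Sigma_2^-$-definable. Conversely, any $\Sigma_2^-$-definable language is downward closed (by the dual of the monotonicity proposition for positive formulas) and is plainly $\Sigma_2$-definable, so $\Sigma_2^-$ is contained in the downward closed $\Sigma_2$-definable languages. This gives both inclusions and hence the equality asserted in \Cref{lem:sig2-}.

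The only point requiring care — and the main obstacle — is verifying that the letter-complementation map interacts correctly with the canonical forms from \cite{PolynomialClosure}: I must confirm that complementing letters preserves membership in the class of unions of languages $A_0^*.s_0.A_1^*\cdots s_t.A_{t+1}^*$ and that the subalphabets $A_i$ and singleton markers $s_i$ transform as expected, so that the explicit $\Sigma_2^+$-formula built in \Cref{lem:sig2+} really does turn into a $\Sigma_2^-$-formula defining $L$. Since that formula uses only the order predicates together with positive occurrences of the $B(x)$ and $s_i(x_i)$ clauses, negating each unary predicate (and leaving $<$ untouched) yields a bona fide $\Sigma_2^-$-formula, so this check is routine once the dictionary is stated precisely.
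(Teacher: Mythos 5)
Your proof is correct, but it takes a genuinely different route from the paper. The paper proves \Cref{lem:sig2-} by rerunning the construction of \Cref{lem:sig2+} in dual form: starting from the Pin--Weil canonical shape $A_0^*.s_0.A_1^*\cdots s_t.A_{t+1}^*$, it writes down directly a $\Sigma_2^-$-formula for the \emph{downward} closure, replacing the positive clauses ``the letter at $x_i$ contains $s_i$'' and ``the letter at $y$ contains some element of $A_i$'' by the negated clauses ``the letter at $x_i$ is contained in $s_i$'' and ``the letter at $y$ is contained in some element of $A_i$'', each of which is a Boolean combination of negated unary atoms. You instead derive the lemma as a formal corollary of \Cref{lem:sig2+} via the order-reversing involution $s\mapsto \Sigma\setminus s$ on the alphabet $\part(\Sigma)$: this exchanges upward- and downward-closed languages, preserves $\Sigma_2$-definability (the induced substitution $a(x)\mapsto\neg a(x)$ does not touch the quantifier prefix), and exchanges $\Sigma_2^+$ with $\Sigma_2^-$. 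Both arguments are sound; yours has the advantage of reusing \Cref{lem:sig2+} as a black box and making the duality explicit, while the paper's is more direct and avoids setting up the dictionary. One small remark: the ``main obstacle'' you flag at the end --- checking that letter-complementation preserves the Pin--Weil canonical forms --- is not actually needed in your argument. You only apply \Cref{lem:sig2+} to $\overline{L}$ as a black box and then transport the \emph{output} formula back through the substitution $a(x)\mapsto\neg a(x)$; at no point do you need $\overline{L}$ itself to be presented in canonical form. That paragraph conflates your transport strategy with the paper's direct one, but it introduces no error.
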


We can now deduce the following lemma:

\begin{lemma}\label{pi2+}
    The set $\Pi_2^+$ captures the set of monotone $\Pi_2$-definable languages.
\end{lemma}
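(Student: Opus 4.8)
The plan is to prove \Cref{pi2+} by duality, reducing it to the already-established \Cref{lem:sig2-}. The key observation is that the classes $\Sigma_2$ and $\Pi_2$ are dual under complementation: a language $L$ is $\Pi_2$-definable if and only if its complement $L^c$ is $\Sigma_2$-definable, since a $\Pi_2$-formula is by definition the negation of a $\Sigma_2$-formula. Moreover, positivity and negativity of predicates swap under this complementation: pushing the negation of a $\Sigma_2$-formula inward (down to the leaves, where predicates sit) turns positive predicate occurrences into negated ones and vice versa. Concretely, $L$ is $\Pi_2^+$-definable precisely when $L^c$ is $\Sigma_2^-$-definable, where $\Sigma_2^-$ is the class from \Cref{lem:sig2-} in which all predicates are negated.

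First I would establish the two dualities explicitly. On the semantic side, monotonicity dualizes to downward-closure: a language $L$ is monotone (upward closed) if and only if $L^c$ is downward closed, directly from the definition of $\leqa$ and complementation. On the syntactic side, I would note that negating a $\Sigma_2$-formula $\exists \bar x \forall \bar y\, \varphi$ yields $\forall \bar x \exists \bar y\, \neg\varphi$, which is in prenex form with one quantifier alternation and hence $\Pi_2$; driving the negation down to the quantifier-free matrix $\varphi$ and then to the atomic predicates converts a formula that is positive in the unary predicates into one that is negative in them (every $a(x)$ becomes $\neg a(x)$), and conversely. This gives the bijection: $L$ is $\Pi_2^+$-definable $\iff$ $L^c$ is $\Sigma_2^-$-definable.

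With these dualities in hand, the argument is a short chain of equivalences. Suppose $L$ is monotone and $\Pi_2$-definable. Then $L^c$ is downward closed and $\Sigma_2$-definable, so by \Cref{lem:sig2-} the language $L^c$ is $\Sigma_2^-$-definable, and therefore $L = (L^c)^c$ is $\Pi_2^+$-definable. Conversely, any $\Pi_2^+$-definable language is monotone (being positive in the predicates, it is monotone by the same reasoning as the positivity-implies-monotonicity proposition) and $\Pi_2$-definable, giving the reverse inclusion. This establishes that $\Pi_2^+$ captures exactly the monotone $\Pi_2$-definable languages.

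I expect the main obstacle to be the careful bookkeeping in the syntactic duality step: verifying that the class $\Sigma_2^-$ defined in \Cref{lem:sig2-} (``$\Sigma_2$-formulas with negations on all predicates'') is genuinely the De Morgan dual of $\Pi_2^+$, so that complementation induces an exact correspondence rather than merely an inclusion. One must check that negating a $\Pi_2^+$-formula and renormalizing to prenex form lands precisely in $\Sigma_2^-$ and that no spurious positive occurrences survive. Once this correspondence is pinned down, the proof is essentially immediate, as it simply reflects \Cref{lem:sig2-} through the complementation involution, using \Cref{rmk:monclos} implicitly via the monotone/downward-closed duality.
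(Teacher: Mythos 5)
Your proposal is correct and follows essentially the same route as the paper: both reduce the statement to \Cref{lem:sig2-} via the duality that $L$ is monotone and $\Pi_2$-definable iff $L^c$ is downward closed and $\Sigma_2$-definable, and then observe that negating the resulting $\Sigma_2^-$-formula yields a $\Pi_2^+$-formula for $L$. The only cosmetic difference is that the paper phrases this through the dual monotone closure $L^{\dualmon}$ (using \Cref{rmk:monclos} to identify $L$ with $L^{\dualmon}$ when $L$ is monotone), whereas you apply the complementation directly to the monotone language; the underlying argument is identical.
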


\begin{proof}
    Then again, we only need to show the difficult direction.

    Let $L$ be a $\Pi_2$-definable language.
    It is enough to show that $L^\dualmon$ is $\Pi_2^+$-definable according to \Cref{rmk:monclos}.

    By definition of $\Pi_2$, the complement $L^c$ of $L$ is $\Sigma_2$-definable.
    Hence, $(L^c)^\downarrow$ is definable by a $\Sigma_2^-$-formula $\varphi$ given by \Cref{lem:sig2-}.
    Therefore, $\neg \varphi$ is a formula from $\Pi_2^+$ describing $L^\dualmon$.
\end{proof}

Finally, we can prove \Cref{thm:sigpimon}:

\begin{proof}
    Thanks to \cite{OneQuantifierAlternation}, it is straightforward that any language from $\Sigma_2^+ \cap \Pi_2^+$ is monotone and $\FOtw$-definable.

    Let $L$ be a monotone $\FOtw$-definable language.

    In particular, $L$ belongs to $\Sigma_2$ and is monotone.
    Thus, by \Cref{lem:sig2+}, $L$ belongs to $\Sigma_2^+$.
    Similarly, $L$ belongs to $\Pi_2^+$ by \Cref{pi2+}.
\end{proof}

This last result shows how close to capture monotone $\FOtw$-definable languages $\FOtwp$ is.
However, it does not seem easy to lift the equivalence $\Sigma_2 \cap \Pi_2 = \FOtw$ to their positive fragments as we did for the other classical equivalences in \Cref{sec:equiv}.
Indeed, the proof from \cite{OneQuantifierAlternation} relies itself on the proof of \cite{PolynomialClosure} which is mostly semantic while we are dealing with syntactic equivalences.

This immediately implies that a counter-example separating $\FO$-monotone from $\FOp$ cannot be in $\FOtw[\bin_<]$ as stated in the following corollary:

\begin{corollary}\label{FO2clot}
    Any monotone language described by an $\FOtw[\bin_<]$ formula is also described by an $\FOp$ formula.
\end{corollary}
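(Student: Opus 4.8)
The plan is to derive \Cref{FO2clot} as a direct corollary of \Cref{thm:sigpimon}, which is the real workhorse. Suppose $L$ is a monotone language defined by some $\FOtw[\bin_<]$ formula. The whole point is that \Cref{thm:sigpimon} precisely characterizes the monotone $\FOtw$-definable languages as exactly those in $\Sigma_2^+ \cap \Pi_2^+$. So first I would invoke \Cref{thm:sigpimon} to conclude that $L$ belongs to $\Sigma_2^+ \cap \Pi_2^+$; in particular $L$ is $\Sigma_2^+$-definable.

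Next, since $\Sigma_2^+$ is by definition a syntactic fragment of $\FOp$ (it consists of positive formulas in prenex form with one quantifier alternation, using no negation), any $\Sigma_2^+$-formula is in particular an $\FOp$-formula. Hence $L$ is $\FOp$-definable, which is exactly the claim. This is the entire argument: the corollary is essentially a one-line consequence once \Cref{thm:sigpimon} is in hand.

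If one wants to spell out why $\Sigma_2^+ \subseteq \FOp$ a little more carefully, the point is that a $\Sigma_2^+$ formula has shape $\exists x_1,\dots,x_n \forall y_1,\dots,y_m\, \varphi$ with $\varphi$ quantifier-free and negation-free (all predicates appearing positively), so syntactically it already lies in the grammar of $\FOp[\bin_<]$; no translation is needed. Thus the real content has been pushed entirely into \Cref{thm:sigpimon}, and the corollary merely reads off the containment $\Sigma_2^+ \subseteq \FOp$.

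The main obstacle here is not in the corollary itself but in the theorem it relies on. Since I am permitted to assume \Cref{thm:sigpimon}, the proof is immediate. Concretely I would write:

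\begin{proof}
    Let $L$ be a monotone language described by an $\FOtw[\bin_<]$ formula. By \Cref{thm:sigpimon}, the monotone $\FOtw$-definable languages are exactly those in $\Sigma_2^+ \cap \Pi_2^+$, so in particular $L$ is $\Sigma_2^+$-definable. Since $\Sigma_2^+$-formulas are by definition negation-free, they are $\FOp$-formulas. Hence $L$ is $\FOp$-definable.
\end{proof}
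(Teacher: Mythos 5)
Your proof is correct and takes essentially the same route as the paper, which presents \Cref{FO2clot} as an immediate consequence of \Cref{thm:sigpimon} via the observation that $\Sigma_2^+$ (equivalently $\Pi_2^+$) formulas are syntactically negation-free and hence $\FOp$ formulas. Nothing is missing.
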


% \quent{Je garde ça pour l'instant}

% The general shape of $\FOtw[\bin_<]$ languages as described by \cite[Theorem 5]{DA} can be used to obtain a first result on monotonicity, namely that monotone languages defined by $\FOtw[\bin_<]$ are always $\FOp$-definable.
% This means that a counter-example such as the language $K$ cannot be found in $\FOtw[\bin_<]$.

%Since a monotone language is its own monotone closure, the proof of \Cref{FO2clot} follows from the following lemma, which is stronger. \denis{pourquoi mettre le résultat le plus fort en lemme ? plutot thm et l'autre comme corollaire non ?}

% \begin{theorem}  \label{monClosFO2}
%     The monotone closure of an $\FOtw[\bin_<]$ language is $\FOp$-definable.
% \end{theorem}

% \begin{proof}
%     Let $L$ be a language described by an $\FOtw[\bin_<]$ formula. According to the article \cite{DA}, $L$ can be written as a union of languages of the form:
%         $$
%         A_0^*.s_0.A_1^*.s_1.....s_t.A_{t+1}^*,
%         $$
    
%     where $t$ is a natural number, $A_0$, ... $A_{t+1}$ are subalphabets of $A$ and $s_0$, ..., $s_t$ are letters.

%     It is easy to describe the monotone closure of such a language in $\FO^+$ (see Appendix \ref{ap:monclosFO2proof} for details).
% \end{proof}

% \Cref{monClosFO2} immediately implies the following corollary:

% \quent{A partir d'ici la vie reprend son cours}

If the monotone closure $L^{\uparrow}$ of a language $L$ described by a formula of $\FOtw[\bin_<]$ is in $\FOp$, nothing says on the other hand that $L^{\uparrow}$ is described by a formula of $\FOtw[\bin_<]$, or even of $\FOtw[\bin_0]$ as the counterexample $L=a^*bc^*de^*$ shows.
The monotone closure $L^{\uparrow}$ cannot be defined by an $\FOtw[\bin_0]$ formula.
This can be checked using for instance Charles Paperman's online software: \url{https://paperman.name/semigroup/}.
Notice that the software uses the following standard denominations: \textbf{DA} corresponds to $\FOtw[\bin_<]$, and \textbf{LDA} to $\FOtw[\bin_0]$.

% On a very simple fragment, we can state that every monotone language is described by a positive formula. So let us start gently with an intermediate lemma.

% \begin{lemma} \label{TLXYform}
%     Let $\varphi$ be a formula of $\TL[\Y,\X]$ and $i$ a natural integer. The language described by $(\varphi,i)$ (i.e. the set of words $u$ such that $u,i \models \varphi$) is a union of languages of the form $L_f$, or $L_f\alph^*$ where $L_f$ is a finite language.
% \end{lemma}

% See Appendix \ref{ap:TLXYlemmaproof} for proof.

% \begin{corollary} \label{monClosLTLXY}
%     The monotone closure of a language described by a $\TL[\Y,\X]$ formula is in $\TLp[\Y,\X]$.
% \end{corollary}

% See Appendix \ref{ap:TLXYcoroproof} for proof.

% It seems that really trivial fragments of first-order logic lead to an equivalence between syntactic positivity and semantic monotonicity while we can find counter-examples for the standard first-order logic and for $\FOtw$ with the "between" predicate.
We give the following conjecture, where $\FOtw$ can stand either for $\FOtw[\bin_<]$ or for $\FOtw[\bin_0]$

\begin{conjecture} \label{conj}~
\begin{itemize}
    \item A monotone language is definable in $\FOtw$ if and only if it is definable in $\FOtwp$.
    \item It is decidable whether a given regular language is definable in $\FOtwp$
\end{itemize}

\end{conjecture}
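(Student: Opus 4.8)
The plan is to treat the two items together, reducing the decidability claim to the first (expressiveness) item and then concentrating all the effort there. Observe first that a regular language is $\FOtwp$-definable if and only if it is \emph{monotone} and $\FOtw$-definable: the forward direction is immediate, since an $\FOtwp$ formula is positive (hence monotone) and lies in $\FOtw$; the backward direction is exactly the non-trivial half of the first item of the conjecture. Granting that equivalence, decidability of $\FOtwp$-membership follows by running two already-available decision procedures in conjunction: monotonicity of a regular language is decidable by \Cref{caracmonot}, with the complexity bounds of \Cref{prop:algo} and \Cref{complexity}, and $\FOtw$-definability is decidable via the algebraic characterisations of \textbf{DA} (for $\FOtw[\bin_<]$) and \textbf{LDA} (for $\FOtw[\bin_0]$) applied to the syntactic monoid. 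Thus the whole conjecture rests on the single expressiveness statement: every monotone $\FOtw$-definable language is $\FOtwp$-definable.

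By \Cref{thm:sigpimon} the monotone $\FOtw$-definable languages are exactly $\Sigma_2^+ \cap \Pi_2^+$, and conversely every $\FOtwp$-definable language lies in $\Sigma_2^+\cap\Pi_2^+$, being monotone and $\FOtw$-definable and hence captured by \Cref{thm:sigpimon}. So the remaining task is the single inclusion $\Sigma_2^+\cap\Pi_2^+\subseteq\FOtwp$, i.e. a positivity-preserving version of the Thérien--Wilke equality $\Sigma_2\cap\Pi_2=\FOtw$. I would first attack the successor-free case $\bin_<$ through the temporal bridge \Cref{coro:equiv2withoutX}, which recasts the goal as: every monotone $\UTL[\P,\F,\H,\G]$-definable language is definable in $\UTLp[\P,\F,\H,\G]$. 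Starting from the explicit $\Sigma_2^+$ normal form of \Cref{lem:sig2+} — a union of monotone products $A_0^*s_0A_1^*\cdots s_tA_{t+1}^*$ — one can try to translate each such product directly into a nested-$\F$/$\G$ positive formula, the alphabet constraints $A_i(y)=\bigvee_{b\in A_i}b(y)$ being already positive; the delicate point is to make each product \emph{unambiguous}, as \textbf{DA} demands, while keeping the construction negation-free, using the matching $\Pi_2^+$ description to locate the markers without any complementation.

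The alternative, and probably cleaner, route is algebraic: conjecture an Eilenberg-style characterisation of $\FOtwp$ by an ordered variety, namely the ordered syntactic monoids lying in \textbf{DA} that additionally satisfy the monotonicity inequalities $h(s)\leq_L h(s')$ for all letters $s\subseteq s'$ of \Cref{caracmonot}. One would then have to prove that membership in this ordered variety already forces $\FOtwp$-definability, by synthesising a positive two-variable formula from the ordered-monoid data. Once the $\bin_<$ case is settled, the $\bin_0$ (i.e. \textbf{LDA}) case should follow by the usual local-divisor or wreath-product lift, checking at each step that positivity is preserved.

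The main obstacle is precisely the one flagged after \Cref{thm:sigpimon}: the classical proof of $\Sigma_2\cap\Pi_2=\FOtw$ passes through the polynomial-closure machinery of Pin--Weil \cite{PolynomialClosure}, which is semantic and manipulates unambiguous products and the monoid structure with no control over negations, and the separation theorem underlying \Cref{equiv2} is known not to preserve the unary fragment. Hence neither existing argument transfers verbatim, and the crux is to supply a genuinely negation-aware construction — either a positivity-tracking reworking of the unambiguous-product decomposition, or the direct synthesis of the positive two-variable formula from the ordered syntactic monoid. I expect this synthesis step, rather than the surrounding reductions, to be where the real difficulty lies.
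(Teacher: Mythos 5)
The statement you are addressing is \Cref{conj}, which the paper explicitly leaves \emph{open}: there is no proof of it in the paper, only the closing remark that the first item implies the second because $\FOtw$-definability and monotonicity are both decidable. Your opening reduction is exactly that remark, correctly elaborated (monotonicity via \Cref{caracmonot}/\Cref{prop:algo}/\Cref{complexity}, $\FOtw$-definability via the \textbf{DA}/\textbf{LDA} characterisations), and your recasting of the remaining task as the single inclusion $\Sigma_2^+\cap\Pi_2^+\subseteq\FOtwp$ via \Cref{thm:sigpimon} is also faithful to the paper, as is your diagnosis of the obstruction: the Pin--Weil polynomial-closure argument \cite{PolynomialClosure} underlying Th\'erien--Wilke is semantic and not positivity-aware, and the separation theorem used in \Cref{equiv2} does not preserve the unary fragment. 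So the framing is sound and matches the paper's own discussion of why the problem is hard.

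But this is not a proof, and you say so yourself: the crux --- a negation-free construction realising $\Sigma_2^+\cap\Pi_2^+\subseteq\FOtwp$ --- is never carried out. Neither of your two routes is more than a naming of strategies. For the temporal route, translating a single monotone product $A_0^*s_0A_1^*\cdots s_tA_{t+1}^*$ into $\UTLp[\P,\F,\H,\G]$ is not the issue; the issue is that a \emph{union} of such products need not be $\FOtw$-definable unless the products can be made unambiguous, and you give no mechanism for achieving unambiguity without complementation --- ``using the matching $\Pi_2^+$ description to locate the markers'' is precisely the step that nobody knows how to do positively. For the algebraic route, an Eilenberg-style ordered-variety characterisation of $\FOtwp$ is not available; establishing one (i.e.\ synthesising a positive two-variable formula from ordered-monoid data in \textbf{DA} satisfying the letter inequalities of \Cref{caracmonot}) would itself be a theorem at least as strong as the conjecture, so invoking it is circular as a plan. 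A further small gap: \Cref{thm:sigpimon} rests on Th\'erien--Wilke for $\FOtw[\bin_<]$ only, so even your reduction step is not justified for the $\bin_0$ reading of the conjecture; your proposed ``local-divisor or wreath-product lift'' is again only a pointer, and such lifts are not known to preserve positivity. In short: correct scaffolding, coinciding with the paper's own observations, but the mathematical content of the conjecture remains untouched.
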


Since we can decide whether a language is definable in $\FOtw$ and whether it is monotone, the first item implies the second one.

\bibliography{ref}

\appendix

\section{Proof of Lemma \ref{LTLtoFO}}\label{app:LTLtoFO}

\begin{proof}
    %The second part of the lemma is trivial because $\FOthp$ is included in $\FOp$.

    %Let us show the other part by induction on the $\LTLp$ formula.
    
    Let us show the lemma by induction on the $\LTLp$ formula.
    We inductively construct for any formula $\varphi$ of $\LTLp$, a formula $\varphi^\bigstar(x)$ of $\FOthp$ with one free variable that describes the same language. This just amounts to remove the negation case in the classical proof, no additional difficulty here.

    \begin{itemize}
        \item $\toFO\bot = \bot$,
        \item $\toFO\top = \top$,
        \item $\toFO a = a(x)$,
        \item $\toFO{(\varphi \land \psi)}(x) = \varphi^\bigstar(x) \land \psi^\bigstar(x)$,
        \item $\toFO{(\varphi \lor \psi)}(x) = \varphi^\bigstar(x) \lor \psi^\bigstar(x)$,
        \item $\toFO{(\X \varphi)}(x) = \exists y, \succ(x,y) \land \varphi^\bigstar(y)$,
        \item $\toFO{(\varphi \U \psi)}(x) = \exists y, x \leq y \land \toFO{\psi}(y) \land \forall z, (z < x \lor y \leq z \lor \toFO\varphi(z))$,
        \item $\toFO{(\psi \R \varphi)}(x) = \toFO{(\varphi \U \psi )}(x)\vee(\forall y,y<x\vee \toFO{\varphi}(y))$.
    \end{itemize}

    The translation of a formula $\varphi$ of $\LTLp$ into a closed formula of $\FOthp$ is therefore $\exists x, x=0 \land \toFO\varphi(x)$, where $x=0$ is short for $\forall y, y \geq x$.

    This construction makes it possible to reuse the variables introduced. This is why we can translate the formulas of $\LTLp$ into $\FOthp$.
\end{proof}

\section{Monoids}

\subsection{Algebraic definitions} \label{ap:defmon}

\begin{definition}
    A semigroup is a pair $(\mathbf{S}, \op)$ where $\op$ is an associative internal composition law on the non-empty set $\mathbf{S}$.
\end{definition}

\begin{remark}
    We allow ourselves the abuse of language which consists in speaking of the semigroup $\mathbf{S}$ instead of the semigroup $(\mathbf{S}, \op)$.
\end{remark}

\begin{definition}
    A monoid is a pair $(\M, \op)$ which is a semigroup, and which has a neutral element noted $1_{\M}$ (or simply $1$ when there is no ambiguity), i.e. which verifies:
    $$
    \forall m \in \M, 1 \op m = m \op 1 = m.
    $$
\end{definition}

\begin{definition}
    Let $(\M, \op)$ and $(\M', \circ)$ be two monoids. An application $h$ defined from $\M$ into $\M'$ is a morphism of monoids if:
    $$
    \forall (m_1,m_2) \in \M^2, h(m_1 \op m_2) = h(m_1) \circ h(m_2),
    $$
    
    and
    $$
    h(1_{\M}) = 1_{\M'}.
    $$

    Similarly, if $\M$ and $\M'$ are just semigroups, $h$ is a morphism if it preserves the semigroup structure.
    
\end{definition}

\begin{definition}
    Let $(\M,\op)$ be a monoid, and $\leq$ an order on $\M$. We say that $\leq$ is compatible with $\op$ if:

    $$
    \forall (m,m',n,n') \in \M^4, m \leq n \land m' \leq n' \implies m \op m' \leq n \op n'.
    $$
\end{definition}

\begin{definition}
    Let $L$ be a language and $(\M, \op)$ a finite monoid. We say that $\M$ recognises $L$ if there exists a monoid morphism $h$ from $(\alph^*,.)$ into $(\M, \op)$ such that $L = h^{-1}(h(L))$.
\end{definition}

\begin{definition}
    Let $L$ be a regular language, and $u,v\in\alph^*$ be any two words. We define the equivalence relation of indistinguishability denoted $\sim_L$ on $\alph^*$. We write $u \sim_L v$ if:
    $$
    \forall (x,y) \in \alph^* \times \alph^*, xuy \in L \iff xvy \in L.
    $$

    Similarly, we write $u \leq_L v$ if:
    $$
    \forall (x,y) \in \alph^* \times \alph^*, xuy \in L \implies xvy \in L.
    $$

    The $\leq_L$ preorder is called the $L$ syntactic preorder.
\end{definition}

\begin{definition}
    Let $L$ be a regular language. We define the syntactic monoid of $L$ as $\M_L = L/\sim_L$.
\end{definition}

\begin{remark}
    This is effectively a monoid, since $\sim_L$ is compatible with left and right concatenation. Moreover, the syntactic monoid recognises $L$ through canonical projection. Moreover, we can see that the order $\leq_L$ naturally extends to an order compatible with the product on the syntactic monoid. We will use the same notation to designate both the pre-order $\leq_L$ and the order induced by $\leq_L$ on $\M_L$, which we will call syntactic order.
\end{remark}

\subsection{\texorpdfstring{Proof of \Cref{ordrmono}}{Proof of Lemma 29}} \label{ap:proofordrmono}

\begin{proof}
    The right-to-left direction follows from the definition of monotone languages. Indeed, suppose we have a language $L$ and an order $\leqml$ on its syntactic monoid that verifies the assumptions. Let $u$ be a word in $L$, and $v\geqa u$. By hypothesis, we have $h(v)\geqml h(u)$. Again by hypothesis, since $h(u)\in h(L)$, we also have $h(v)\in h(L)$, so $v$ belongs to $L$. We can conclude that $L$ is monotone.

    Conversely, let us consider a regular language $L$, and note $h$ its canonical projection onto its syntactic monoid. Let $\to$ be the binary relation induced by $\leqa$ on $\M_L$, i.e. such that $m \to n$ if there are words $u$ and $v$ such that $m=h(u)$, $n=h(v)$ and $u \leqa v$. The transitive closure of $\to$, denoted $\to^*$, is then an order relation.
 
    First of all, it is clearly reflexive and transitive.

    Then, to show antisymmetry, it is sufficient to show that $\to^*$ is included in $\leq_L$.

    Let $m$ and $n$ be two elements of $\M_L$ such that $m \to^* n$. By definition, there are $m_1$, $m_2$, ..., $m_p$ $p$ elements of $\M_L$ such that $m \to m_1 \to m_2 \to ... \to m_p \to n$,
    where $p$ is a natural number.
    We then have $u_0$, $u_1$, $u_1'$, $u_2$, $u_2'$, ..., $u_p$, $u_p'$, and $u_{p+1}$ such that $m = h(u_0)$, $m_1 = h(u_1) = h(u_1')$, $m_2 = h(u_2) = h(u_2')$, ..., $m_p = h(u_p) = h(u_p')$ and $n = h(u_{p+1})$ and $u_0 \leqa u_1$, $u_1' \leqa u_2$, $u_2' \leqa u_3$, ..., $u_p' \leqa u_{p+1}$.

    Now let $x$ and $y$ be two words (they constitute a context). By monotonicity of $L$, if $xu_0y$ belongs to $L$, then $xu_1y$ belongs to $L$. Then, since $h(u_1) = h(u_1')$, if $xu_1y$ belongs to $L$, then so does $xu_1'y$. We immediately deduce that if $xu_0y$ belongs to $L$, then so does $xu_{p+1}y$. This proves that $\to^*$ is included in $\leq_L$.

    So $\to^*$ is an order, which we note $\leqml$.

    Let us check its compatibility with the operation $\op$ of the monoid.
    Let $m$, $m'$, $n$ and $n'$ be elements of $\M_L$ such that $m \leqml n$ and $m' \leqml n'$.

    First, let us assume $m \to n$ and $m' \to n'$. We then have $u$, $u'$, $v$ and $v'$ representing $m$, $m'$, $n$ and $n'$ respectively, such that $u \leqa v$ and $u' \leqa v'$. So we have $uv \leqa u'v'$ and thus, $mn \leqml m'n'$. Now, if we only have $m \to^* n$ and $m' \to^* n'$, then we have finite sequences $(m_i)_{i=1}^p$ and $(m_i')_{i=1}^p$, which we can assume to be of the same length $p$ by reflexivity of $\to$, such that $m \to m_1 \to ... \to m_p \to n$ and $m' \to m_1' \to ... \to m_p' \to n'$. So we have $m \op m' \leqml m_1 \op m_1'$, but also $m_1 \op m_1' \leqml m_2 \op m_2'$, ..., $m_p \op m_p' \leqml n \op n'$. We then obtain the inequality $mn \leqml m'n'$ by transitivity.

    %Finally, it remains to check whether $\leqml$ verifies the conditions of \Cref{ordrmono}.

    Finally, it is clear that if $u\leqa v$ then $h(u)\leqml h(v)$.

    %\denis{rien a prouver ici si la caractérisation est "inclus dans $\leq_L$}
    %Next, let us consider two elements $m$ and $n$ of $\M_L$ such that $m \leqml n$. Suppose $m$ is an element of $h(L)$. Then, with the context $(x,y) =(\varepsilon,\varepsilon)$, we find by the inclusion $\leqml \subseteq \leq_L$ that if an antecedent $u$ of $m$ by $h$ belongs to $L$, then an antecedent $v$ of $n$ by $h$ belongs to $L$.

    The relationship $\leqml$ therefore satisfies the constraints imposed.
    
\end{proof}

\subsection{\texorpdfstring{Proof of \Cref{complexity}}{Proof of Proposition 32}} \label{ap:proofmon}

\begin{proof}
    First, in the algorithm from the \Cref{prop:algo}, at any given time, we only need to code two letters from $\alph$ and two elements from the monoid $\M$. 
    So we can code $S$ and $S'$ with $|\Sigma|$ bits and increment them through the loop in order to go through the whole alphabet.
    For example, if $\Sigma = \{a,b,c\}$ then $a$ is coded by $001$, $\{a,b\}$ by $010$ and so on.
    In the same way, we only need $2\lceil\log_2(\M)\rceil$ bits to code $(m,n)$.
    Using lookup tables for applying the function $h$, the product $\cdot$, and testing membership in $F$, all operations can be done in $\L$.
    Thus, the algorithm from the \Cref{prop:algo} is in $\L$.

    To decide whether a DFA $\aut$ describes a monotone language, we can compute the NFA $\aut^{\uparrow}$ by adding to each transition $(q_0,a,q_1)$ of $\aut$ any transition $(q_0,b,q_1)$ with $b$ greater than $a$.
    Thus, $\aut^{\uparrow}$ describes the monotone closure of the language recognised by $\aut$.
    Then, $\aut$ recognises a monotone language if and only if there is not path from an initial to a final state in the product automaton $\overline{\aut} \times \aut^{\uparrow}$, where $\overline{\aut}$ is the complement of $\aut$, obtained by simply switching accepting and non-accepting states. 
    As NFA emptiness is in $\NL$, DFA monotonicity is in $\NL$ as well.

    Now, let us suppose we have an algorithm which takes a DFA as input and returns whether it recognises a monotone language.
    Notice that the DFA emptiness problem is still $\NL-\comp$ when restricted to automata not accepting the empty word $\varepsilon$. We will use this variant to perform a reduction to DFA monotonicity.
    Suppose we are given a DFA $\aut$ on an alphabet $A$ which does not accept $\varepsilon$.
    We build an automaton $\aut'$ on $A \cup \{\top\}$ by adding the letter $\top$ to $A$ in $\aut$, but without any $\top$-labelled transition.
    Now, let us equip $A \cup \{\top\}$ with an order $\leq$ such that $a \leq \top$ for any letter $a$ of $A$.
    Then the new automaton $\aut'$ recognises a monotone language if and only if $\aut$ recognises the empty language.
    Indeed, suppose we have a word $u$ of length $n$ accepted by $\aut$.
    Then, $\aut'$ would accept $u$ but not $\top^n$ which is bigger than $u$.
    Reciprocally, if $\aut$ recognises the empty language then so does $\aut'$ and the empty language is a monotone language.
    Thus, the monotonicity problem is $\NL-\comp$ when the input is a DFA.

\end{proof}

%\section{Proofs about monotone closures}

\section{\texorpdfstring{An $\FOtw[\bin_0 \cup \be]$-formula for the counter-example}{An FO2[<,S,,be]-formula for the counter-example}} \label{ap:formula}

Let us give a formula for the counter-example from \cref{FO2be}.

Let us notice that the successor predicate is definable in $\FOtw[\bin_< \cup \be]$, so results from \cite{between} about the fragment $\FOtw[<,\be]$ apply to $\FOtw[\bin_0 \cup \be]$ as well.

%First of all, it has been shown in \cite{between} that the successor predicate could be expressed in $\FOtw[\bin_0 \cup \be]$.

So it is easy to describe $A^*(
        \top \cup
        \binom{a}{b}^2 \cup \binom{b}{c}^2 \cup \binom{c}{a}^2 \cup
        \binom{a}{b}\binom{c}{a} \cup
        \binom{b}{c}\binom{a}{b} \cup
        \binom{c}{a}\binom{b}{c}
    )A^*$
and to state that factors of length $3$ are in $(abc)^\uparrow$.

Now, for any atomic predicates $s$ and $t$ (i.e. $s,t\in\{a,b,c\}$), let us pose:

$$
\varphi_{s,t} = \forall x, \forall y,
\left(
    s(x) \land t(y) \land x<y \land \bigwedge_{d \in \Sigma} \neg d(x,y)
\right)
\implies
\psi_{s,t}(x,y),
$$

where $\psi_{s,t}(x,y)$ is a formula stating that the two anchors are compatible, i.e. either they both use the ``upper component'' of all the double letters between them, or they both use the ``bottom component''.
Recall that $\bigwedge_{d \in \Sigma}\neg d(x,y)$ means that there is no singleton letter between $x$ and $y$.

For example, $\psi_{a,b}(x,y)$ is the disjunction of the following formulas:
$$
\def\arraystretch{1.5}
    \begin{array}{c}
\binom{b}{c}(x+1)  \wedge \binom{a}{b}(y-1)\\
 \binom{a}{b}(x+1)  \wedge  \binom{c}{a}(y-1)\\
x+1=y
\end{array}
$$

Indeed, the first case correspond to using the upper component of $\binom{b}{c}$ and $\binom{a}{b}$: anchor $a$ in position $x$ is followed by the upper $b$ in position $x+1$, which should be consistent with the upper $a$ in position $y-1$ followed by anchor $b$ in position $y$, the factor from $x+1$ to $y-1$ being of the form $(\binom{b}{c}\binom{c}{a} \binom{a}{b})^+$. 
Similarly, the second case corresponds to the bottom component.
The last case corresponds to anchors directly following each other, without an intermediary factor of double letters. This case appears only for $(s,t)\in\{(a,b),(b,c),(c,a)\}$

Now using the conjunction of all formulas $\varphi_{s,t}$ where $s$ and $t$ are atomic predicates $a,b,c$, we build a formula for the language of \cref{FO2be}.

\section{Games} \label{ap:games}

Erhenfeucht-Fraïssé games and their variants are traditionally used to prove negative expressivity results of $\FO$ fragments. This is why we were interested in Erhenfeucht-Fraïssé games matching fragments of $\FOp$.
Although we did not manage to use them in the present work, we include here a variant that could be suited for proving $\FOtwp$ inexpressibility results.

\begin{definition}
    We note $\EF_k^{n+}[\bin](u_0,u_1)$, the Ehrenfeucht-Fraïssé game associated with $\FO^{n+}[\bin]$ at $k$ turns on the pair of words $(u_0,u_1)$. When there is no ambiguity, we simply note $\EF_k^{n+}(u_0,u_1)$. In $\EF_k^{n+}(u_0,u_1)$, two players, Spoiler and Duplicator, play against each other on the word pair $(u_0,u_1)$ in a finite number $k$ of rounds. Spoiler and Duplicator will use tokens numbered $1$, $2$, ..., $n$ to play on the positions of the words $u_0$ and $u_1$.
    
    On each turn, Spoiler begins. He chooses $\delta$ from $\{0,1\}$ and $i$ from $[\![1,n]\!]$ and moves (or places, if it has not already been placed) the $i$ numbered token onto a position of the word $u_{\delta}$. Duplicator must then do the same on the word $u_{1-\delta}$ with the constraint of respecting binary predicates induced by the placement of the tokens, and only in one direction for unary predicates. More precisely, if $\val_0$ and $\val_1$ are the valuations that to each token (considered here as variables) associates the position where it is placed in $u_0$ and $u_1$ respectively, then
    \begin{itemize}
        \item for any binary predicate $\b(x,y)$, $(u_0,\val_0)\models\b(x,y)$ if and only if $(u_1,\val_1)\models\b(x,y)$,
        \item for any unary predicate $a(x)$ in $\Sigma$, if $(u_0,\val_0)\models a(x)$ then $(u_1,\val_1)\models a(x)$.
    \end{itemize}
    
    If Duplicator cannot meet the constraint, he loses and Spoiler wins.

    In particular, for any $i\in[\![1,n]\!]$, if the letter $s_0$ indicated by the token $i$ on the word $u_0$ is not included in the letter $s_1$ indicated by the token $i$ on the word $u_1$, then Spoiler wins.

    If after $k$ rounds, Spoiler has not won, then Duplicator is declared the winner.
\end{definition}

\begin{theorem} \label{EF2+}
    Let $L$ be a language and $n$ a natural number. The language $L$ is definable by a formula of $\FO^{n+}[\bin]$ if and only if there exists a natural number $k$ such that, for any pair of words $(u_0,u_1)$ where $u_0$ belongs to $L$ but $u_1$ does not, Spoiler has a winning strategy in $\EF_k^{n+}[\bin](u_0,u_1)$.
\end{theorem}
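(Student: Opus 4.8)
The plan is to prove the Ehrenfeucht--Fra\"iss\'e characterisation in \Cref{EF2+} by adapting the classical back-and-forth argument for $\FO^n$ to the positive, asymmetric setting, where the key modification is that the unary-predicate constraint is one-directional ($u_0 \models a(x)$ implies $u_1 \models a(x)$, but not conversely) while binary predicates must match exactly. First I would set up the standard bridge between quantifier rank and game length: for a fixed number $n$ of variables and fixed bound $k$ on rounds, one defines an equivalence (really a preorder) $u_0 \preceq_k u_1$ meaning ``Spoiler has no winning strategy in $\EF_k^{n+}[\bin](u_0,u_1)$'', and shows this preorder has finite index for each $k$. The crucial asymmetry is that $\preceq_k$ is \emph{not} symmetric: it is designed to track exactly the monotone, negation-free information that $\FO^{n+}$ formulas of quantifier rank $k$ can distinguish.

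The heart of the proof is the two-sided implication relating formulas and strategies. For the contrapositive of the right-to-left direction, I would show by induction on $k$ that if $u_0 \preceq_k u_1$ then every $\FO^{n+}[\bin]$ formula of quantifier rank at most $k$ satisfied by $u_0$ is also satisfied by $u_1$; hence if $L$ is defined by such a formula, no pair with $u_0 \in L$, $u_1 \notin L$ can satisfy $u_0 \preceq_k u_1$, i.e.\ Spoiler wins all of them. The inductive step handles $\exists$ by a Spoiler move on $u_0$ that Duplicator must answer, and $\forall$ dually; conjunction and disjunction are immediate, and at the base case the one-directional unary condition is precisely what validates monotonicity of atomic formulas $a(x)$ while the exact matching of binary predicates validates the binary atoms. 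For the left-to-right direction, I would use the finite index of $\preceq_k$ to write, for each word $u_0 \in L$, a \emph{characteristic positive formula} $\chi_{u_0}^{k}$ of quantifier rank $k$ in $n$ variables such that $u_1 \models \chi_{u_0}^k$ iff $u_0 \preceq_k u_1$; taking the (finite) disjunction $\bigvee_{u_0 \in L} \chi_{u_0}^k$ over representatives yields an $\FO^{n+}[\bin]$ formula defining $L$, provided a uniform $k$ works for all separating pairs.

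The main obstacle I anticipate is constructing these characteristic formulas positively while respecting the variable bound $n$. In the symmetric $\FO^n$ setting one builds Hintikka-style formulas by induction on rounds, reusing variables; here I must ensure no negation is introduced, so the base-case description of a configuration can only \emph{assert} which unary predicates hold (lower bounds on letters) and must encode binary relations using only the available positive binary predicates in $\bin$, never their negations. This forces care in the inductive construction: the formula recording ``Duplicator can respond to every Spoiler move'' becomes a conjunction of $\forall$-quantified positive formulas, and ``Duplicator can match some Spoiler move'' an $\exists$-quantified disjunction, and one must verify these nest within $n$ reusable variables. I would lean on the monotonicity already established for $\FOp$ formulas (the \Cref{PFO}-cited proposition) to confirm that $\chi_{u_0}^k$ is genuinely monotone, closing the equivalence between the semantic game condition and the syntactic positive fragment.
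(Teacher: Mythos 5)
Your proposal is correct and follows essentially the same route as the paper: both reduce the theorem to the standard finite-variable Ehrenfeucht--Fra\"iss\'e correspondence (a transfer lemma proved by induction on the number of rounds for one direction, and finite-index characteristic formulas built positively with variable reuse for the other), adapted to the asymmetric unary condition exactly as you describe; the paper merely compresses this by citing the $\FOp$ case from \cite{PFO}. The only slip is a labelling one: your ``contrapositive of the right-to-left direction'' in fact establishes the left-to-right implication (definable $\Rightarrow$ Spoiler wins), while the characteristic-formula construction handles right-to-left.
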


\begin{proof}
    We generalise the proof from \cite[Theorem 5.7]{PFO}, which treats the case of $\FOp$, using a classical construction for $\FO$ with a bounded number of variables.

    Let $n$ be a natural number.
    Let us introduce the concept of initial configuration. For two words $u_0$ and $u_1$ of lengths $l_0$ and $l_1$ respectively, and two functions of $0$, $1$, $2$, ..., or $n$ variables among $x_1$, ... $x_n$, $\val_0$ and $\val_1$ with values in $[\![0,l_0-1]\!]$ and $[\![0,l_1-1]\! ]$ respectively, the game $\EF_k^{n+}[\bin](u_0,u_1)$ has initial configuration $(\val_0,\val_1)$ if token $i$ is placed in position $\val_0(x_i)$ on word $u_0$, when $\val_0(x_i)$ is defined, for any integer $i$ from $[\! [1,n]\!]$, and similarly with $u_1$ for the valuation $\val_1$.

    We then claim that for any natural number $k$ and any formula $\varphi$ of $\FO^{n+}[\bin]$ (possibly with free variables) of quantification rank at most $k$, and for all models $(u_0,\val_0)$ and $(u_1,\val_1)$, Duplicator wins the game $\EF_k^{n+}[\bin](u_0,u_1)$ with initial configuration $(\val_0,\val_1)$, if and only if:
    $$
        u_0,\val_0 \models \varphi \implies u_1,\val_1 \models \varphi.
    $$

    Indeed, starting from the induction from the article \cite{PFO}, we have to adapt the base case to the set of binary predicates $\bin$ considered. The proof is then similar: each element of $\bin$ can impose a constraint in $\FO^{n+}[\bin]$ which is reflected in the constraint on the positions of the tokens. Then, in the induction, we need to modify the valuation update. Indeed, as the number of variables (and therefore of tokens) is limited to $n$, when a variable $x$ already in use is encountered, we do not need to add a variable to the valuation $\val$ constructed, but modify the value taken by $\val$ in $x$, to construct a new valuation $\val'$.

\end{proof}

\end{document}